\definecolor{mycolor}{cmyk}{0.82, 0.23, 0.26, 0}
\titleformat*{\section}{\Large\bfseries\sffamily}
\titleformat*{\subsection}{\large\bfseries\sffamily}
\titleformat*{\subsubsection}{\large\bfseries\sffamily}
\renewcommand{\hat}{\widehat}
\renewcommand{\tilde}{\widetilde}
\renewcommand{\check}{\widecheck}
\renewcommand{\bar}{\overline}
\newcommand{\bbE}{\mathbb{E}}
\newcommand{\bbP}{\mathbb{P}}
\numberwithin{equation}{section}
\theoremstyle{definition}
\newtheorem{theorem}{Theorem}
\newtheorem{assumption}{Assumption}
\newtheorem{definition}{Definition}
\newtheorem{example}{Example}
\newtheorem{lemma}{Lemma}[section]
\newtheorem{proposition}{Proposition}
\newtheorem{remark}{Remark}
\numberwithin{figure}{section}
\numberwithin{table}{section}
\title{Estimating Dyadic Treatment Effects with Unknown Confounders}
\author{
    Tadao Hoshino\thanks{School of Political Science and Economics, Waseda University, 1-6-1 Nishi-waseda, Shinjuku-ku, Tokyo 169-8050, Japan. Email: \href{mailto:thoshino@waseda.jp}{thoshino@waseda.jp}.} \ 
    and Takahide Yanagi\thanks{Graduate School of Economics, Kyoto University, Yoshida Honmachi, Sakyo, Kyoto, 606-8501, Japan. Email: \href{mailto:yanagi@econ.kyoto-u.ac.jp}{yanagi@econ.kyoto-u.ac.jp}.}
}
\begin{document}
	
\maketitle
	
\begin{abstract}
	This paper proposes estimation and inference methods for assessing treatment effects with dyadic data.
	Under the assumption that the treatments follow an exchangeable distribution, our approach allows for the presence of any unobserved confounding factors that potentially cause endogeneity of treatment choice without requiring additional information other than the treatments and outcomes.
	Building on the literature of graphon estimation in network data analysis, we propose a neighbourhood kernel smoothing method for estimating dyadic average treatment effects, and derive the rate of convergence of the proposed estimator under certain regularity conditions.
	We also develop conformal inference methods for predicting outcomes conditional on treatment status.
	We apply our methods to international trade data to assess the impact of free trade agreements on bilateral trade flows.
	
	\bigskip
	
	\noindent \textbf{Keywords}: Causal inference; Conformal inference; Dyadic data; Endogeneity; Graphon; Network analysis.
\end{abstract}

\clearpage
	
\section{Introduction} \label{sec:introduction}

Dyadic data are ubiquitous in our society.
International trade, travels, population flows, military alliances, partnerships between firms, research collaboration, and many others can be represented as dyadic data, where each \textit{dyad} represents a pair of countries, firms, or individuals, depending on the context.
Dyadic data analysis is particularly prevalent in the literature of international trade, where regression-based analysis, the so-called \textit{gravity model}, serves as a primary analytical approach in these fields since the pioneering work by \cite{tinbergen1962shaping}.

Despite the popularity of dyadic data, there are only a few causal inference methods tailored specifically for dyadic data analysis, with some exceptions such as \cite{baier2009estimating}, \cite{arpino2017implementing}, and \cite{nagengast2025staggered}.
This may be due to the non-standard and complex endogeneity structure often encountered in typical applications of dyadic data.
For example, suppose that we are interested in the impacts of free trade agreements (FTAs) on trade flows between countries.
The treatment variable should be considered endogenous because both the decision to enter into an FTA and the trade outcome should be influenced by each country's economic factors and the economic and political relationship between the countries involved.
Thus, if researchers try to resolve the endogeneity issue by using the instrumental variables method, for instance, then they need to prepare at least two different types of instruments: those accounting for confounding factors at the ``origin'' country and those at the ``destination''.
This difficulty of finding appropriate instruments in the context of estimating the impacts of FTAs on trade flows is discussed in depth in \cite{baier2007free}.
In addition, not limited to international trade, dyadic data typically emerge in observational studies where it is difficult to find exogenous variations that help the researcher alleviate the endogeneity issue.

In this paper, we develop an easy-to-implement causal inference method specialized for dyadic data, where the treatment variable may potentially be endogenous, correlating with unobserved confounders in an arbitrary manner.
We suppose a situation where researchers have access to non-experimental data, comprising solely dyad-level treatments and outcomes.
We do not rely on any natural-experimental variations, such as instrumental variables, unexpected policy changes, policy-induced discontinuities, etc.

Our causal inference method is built on the following two requirements: (i) the dyadic treatments follow an exchangeable distribution, and (ii) only unit-level factors (rather than dyad-level factors) can cause endogeneity in the treatment choice.
The first condition is a common requirement in the literature of network statistics concerning the estimation of \textit{graphons}, where a graphon is a symmetric nonparametric function that generates graphs (i.e., networks) of any sizes.
As an important special case, if the treatments are independent and identically distributed (IID) across all dyads, they are automatically exchangeable.
Since requiring the exchangeability can be restrictive in some empirical situations, researchers should verify it on a case-by-case basis.
Nonetheless, it still encompasses a wide range of network formation models considered in the literature (cf. \citealp{graham2020network}).
Fully leveraging the exchangeability of treatments enables us to nonparametrically estimate the propensity score at each dyad without using any additional covariate information.
The second condition limits the sources of endogeneity to unit-level attributes, but it should be noted that variables representing (dis)similarities or social/economic distances between two units are not excluded as sources of endogeneity, as long as they can be expressed as functions of unit-level variables.

Under these assumptions, we study estimation and inference for several dyadic causal parameters.
The most basic component in our analysis is the \textit{dyadic} average treatment effect (ATE), defined as the difference in the conditional means of potential outcomes given the propensity score.
We show that the dyadic ATE can be estimated by regressing the outcome on the propensity score separately for the treatment and control groups.
Given this result, we propose a nonparametric kernel smoothing method for estimating the dyadic ATE and the propensity score, which we call \textit{neighbourhood kernel smoothing}.
The neighbourhood kernel smoothing method extends the neighbourhood smoothing approach introduced by \cite{zhang2017estimating}, which was originally developed for estimating probabilities of network edges.
Once the dyadic ATE is estimated for all dyads, we can then estimate the \textit{individual} ATE and the \textit{global} ATE.
The individual ATE represents the unit-level average treatment effect obtained by averaging the dyadic ATEs over the ``destinations'', while the global ATE is simply the average of dyadic ATEs over all dyads.
Under a set of regularity conditions, we derive the rate of convergence of the neighbourhood kernel smoothing estimator.
Furthermore, we develop conformal inference methods for constructing prediction sets for outcomes conditional on treatment status.

As an empirical application of our method, we investigate the impact of FTAs between two countries on export and import volumes.
Our dataset consists of 37 selected countries and regions in Asia, Oceania, and North America.
The outcome variables of interest are the share of export amount to the partner country among total exports to all countries and the share of import similarly defined, both for the year 2021.
For the treatment variable, we consider all FTAs related to our sample countries that were active as of 2021.
Our empirical results suggest that individual ATE of FTAs is positive for almost all countries (except Singapore) for both exports and imports, and that enactment of an FTA increases bilateral trade flows by approximately 2\% on average.

\textbf{Notation.}
Throughout the paper, $C$ (possibly with subscripts) denotes a global constant that does not depend on other variables, whose value may differ in different contexts.
We write a matrix whose $(i,j)$-th element is $A_{ij}$ as $A = (A_{ij})$ and its $i$-th row as $A_{i \cdot}$.
For random variables $X$ and $Y$, we write $X \overset{d}{=} Y$ if $X$ and $Y$ follow the same probability distribution, $X \lesssim Y$ if $X = O(Y)$ almost surely (a.s.), and $X \lesssim_{\bbP} Y$ if $X = O_{\bbP}(Y)$.

\section{Setup}\label{sec:setup} 

\subsection{Dyadic data}

Consider a population of interest composed of individuals, households, firms, or countries, depending on the context, that are interacting to each other pairwisely.
We observe $n$ units $[n] = \{ 1, 2, \dots, n \}$ drawn from the population, and we treat each pair of units (i.e., dyad) as one observation.
We are interested in estimating the causal effect of a dyadic treatment variable $A_{ij} \in \{ 0, 1 \}$ on a dyadic outcome variable $Y_{ij} \in \mathbb{R}$, where $(i, j)$ represents a dyad.
Throughout, we focus on the case of undirected treatment such that $A_{ij} = A_{ji}$ for all $(i, j)$.
As opposed to the treatment, the outcome variable can be specific to each unit in each dyad; that is, $Y_{ij} \neq Y_{ji}$ in general.
Let $A = (A_{ij})$ and $Y = (Y_{ij})$ denote the $n \times n$ observed matrices of treatments and outcomes, respectively, where the diagonals are normalized to zero: $A_{ii} = Y_{ii} = 0$ for all $i \in [n]$.

The above setup encompasses many empirical situations of interest.
For instance, in our empirical analysis, $A_{ij}$ represents the indicator for the presence of FTA between countries $i$ and $j$, and $Y_{ij}$ corresponds to the amount of exports from $i$ to $j$ or imports from $j$ to $i$.
In other examples, $A_{ij}$ may denote whether firms $i$ and $j$ participate in a researchers exchange program, and $Y_{ij}$ represents the R\&D collaboration output; $A_{ij}$ may be the indicator of political alliance between countries $i$ and $j$ with $Y_{ij}$ representing population migration; and so forth.

The dyadic data $\{(Y_{ij},A_{ij})\}_{i,j\in[n]}$ generally exhibit complex dependence structure due to interactions within and across dyads.
To account for such dependence, we consider a unit-level latent attribute $U_i$, which summarizes all observed and unobserved structural features of unit $i$ relevant to treatment assignment and potential outcomes.
Then, we suppose that the treatment assignment and potential outcomes are determined by
\begin{align}\label{eq:model}
	\begin{split}
		A_{ij} 
		& = \bm{1}\left\{ f(U_i,U_j) \ge U_{ij} \right\}, \\
		Y_{ij}(a) 
		& = y_a\left( U_i, U_j, \xi_{ij}(a) \right),
	\end{split}
\end{align}
for $i \in [n]$, $j \neq i$, and $a \in \{0, 1\}$.
Here, $f$ and $y_a$ are unknown functions, and $U_{ij}$ and $\xi_{ij}(a)$ are unobserved dyad-level attributes.
This specification introduces a dependence structure that is natural for dyadic data analysis.
For example, for dyads sharing unit $i$, the variables $A_{ij}$, $A_{ik}$, $Y_{ij}(a)$, and $Y_{ik}(a)$ are allowed to be systematically related through $U_i$.

The function $f$ in \eqref{eq:model} corresponds to what is known as a \textit{graphon} in network analysis.
A \textit{graphon} is a symmetric measurable function $f: [0,1]^2 \to [0,1]$, where $f(U_i, U_j)$ is interpreted as the edge probability between $i$ and $j$.
We say that an infinite matrix $G = (G_{ij})_{i,j \in \mathbb N}$ is \textit{exchangeable} if $(G_{ij}) \overset{d}{=} (G_{g(i)g(j)})$ for any permutation $g$ of $\mathbb N$.
Exchangeability plays an essential role in our analysis through the Aldous--Hoover representation theorem (\citealp{aldous1981representations}; \citealp{hoover1979relations}): an infinite symmetric random matrix $G$ with binary entries is exchangeable if and only if there exists a symmetric function $f:[0,1]^2 \to [0,1]$ such that for all $i < j$,
\begin{align*}
	G_{ij} = \bm{1}\{f(U_i,U_j)\ge U_{ij}\},
\end{align*}
where $\{U_i\}$ and $\{U_{ij}\}$ are mutually independent and $U_i, U_{ij} \overset{\mathrm{IID}}{\sim} \mathrm{Uniform}[0,1]$ with $U_{ij} = U_{ji}$.
This claim is equivalent to $G_{ij}\mid U_i,U_j \sim \mathrm{Bernoulli}(f(U_i,U_j))$ for all $i < j$ with a symmetric function $f$.
Thus, if the observed treatment matrix $A$ is an $n \times n$ submatrix of $G$, then the edge probabilities in $A$ admit the graphon representation $A_{ij} = \bm{1}\{f(U_i,U_j)\ge U_{ij}\}$.
Importantly, the specification that $U_i,U_{ij} \sim \mathrm{Uniform}[0,1]$ is only a normalization in this particular representation and does not preclude non-uniform, multivariate latent features in the underlying structural network formation model; see Lemma 2.22 of \cite{kallenberg1997foundations} for a related representation result.

Formally, we impose the following assumption.

\begin{assumption}[Treatment mechanism] \label{as:treatmentmat}
	For all $i \in [n]$ and $j \neq i$, $A_{ij}$ is determined as in \eqref{eq:model}, where $f: [0,1]^2 \to [0,1]$ is symmetric, $U_i,U_{ij} \overset{\mathrm{IID}}{\sim} \mathrm{Uniform}[0,1]$, and $U_{ij}=U_{ji}$.
\end{assumption}

While the IID-ness of $A_{ij}$ is an obvious sufficient condition for Assumption \ref{as:treatmentmat}, the assumption accommodates a more flexible class of network formation models.
This point is demonstrated by the following examples, in which such models arise naturally in empirical settings such as international trade, corporate governance, and R\&D collaboration; see \citet{graham2020network} for an overview. 

\begin{example}[Degree-heterogeneity models] \label{ex:heterogeneity}
	Suppose that the conditional link probability of each dyad is given by $P_{ij} = F(\pi(X_i, X_j) + \beta_i + \beta_j)$, where $F$ is a link function, $\pi$ is a symmetric function, $X_i$ is a vector of unit-level covariates, and $\beta_i$ represents a random degree heterogeneity effect.
	A similar model is considered in \cite{graham2017econometric}.
	If $\{(X_i, \beta_i)\}$ are IID, irrespective of the dimensionality or observability of $X_i$ and $\beta_i$, the resulting network is exchangeable.
\end{example}

\begin{example}[Stochastic block models with random membership] \label{ex:stochastic}
	Suppose that the population of units can be classified into $m$ groups, and that the link probability between $i$ and $j$ is given as $P_{ij} = \sum_{1 \le a, b \le m} z_{i,a} z_{j, b} p_{ab}$, where $p_{ab} \in (0,1)$ for all $1 \le a,b \le m$ such that $p_{ab} = p_{ba}$, and $z_{i,a} \ge 0$ is the membership variable satisfying $\sum_{a = 1}^m z_{i,a} = 1$ for all $i \in [n]$.
	This model can be viewed as a mixed-membership \textit{stochastic block model}.
	Here, assume that the membership variable is random and IID.
	This type of model has been studied widely in the literature (e.g., \citealp{airoldi2008mixed, white2016mixed}).
	Clearly, the resulting network is exchangeable.
\end{example}

\begin{example}[Pairwise strategic interaction models] \label{ex:pairwise}
	The exchangeability does not preclude strategic interaction, as long as it occurs within each dyad, as in \cite{hoshino2022pairwise}.
	This is empirically relevant since the treatments of interest in our model (e.g., FTAs, alliance, R\&D collaboration) are likely determined through cooperative decision-making.
	For example, we can consider a cooperative game model such that $P_{ij} = F(\pi(X_i, X_j) + \pi(X_j, X_i))$, where $\pi(X_i, X_j)$ denotes the payoff for $i$ from connecting to $j$.
	This model still fits into our framework if $\{X_i\}$ are IID.
\end{example}

The idea of using a graphon to model treatment assignment is not exclusive to our paper; it has also been explored by \citet{toulis2025estimating}.
They study unit-level treatments defined as functions of pre- and post-treatment networks, where edge probabilities are described by a graphon-type function of observed covariates.
Their primary focus is on the entanglement among unit-level treatments operating through the network.
By contrast, in our setting, the dyadic treatment itself forms the network $A$, and the individual entries of $A$ directly represent the treatment assignments of interest.

One limitation of Assumption \ref{as:treatmentmat} is that the resulting network will automatically be a dense network.
It is easy to see that the expected degree for each unit is a constant times $n-1$, which diverges as $n$ increases.
In order to accommodate the sparsity, \cite{klopp2017oracle} proposed the so-called sparse graphon estimation method.
Extending our method in this direction would be important, but is left for future study.

\subsection{Causal parameters}

Define the propensity score by
\begin{align*}
	P_{ij} \equiv \bbE[ A_{ij} \mid U_i, U_j ] = f(U_i, U_j).
\end{align*}
Note that $P_{ij}$ is a function of unit-level latent attributes $(U_i, U_j)$, differently from the standard propensity score defined with respect to observed covariates.
Also notice, for all $(i, j)$, that $P_{ij} = P_{ji}$ by Assumption \ref{as:treatmentmat} and that $P_{ij} = \bbE[ A_{ij} \mid P_{ij} ]$ by the law of iterated expectations.
Since $A_{ii} = 0$, we normalize $P_{ii} = 0$ for all $i \in [n]$.

By construction, $Y_{ij} = A_{ij} Y_{ij}(1) + (1 - A_{ij}) Y_{ij}(0)$.
The treatment effect of $A_{ij}$ at dyad $(i, j)$ is defined as $Y_{ij}(1) - Y_{ij}(0)$.
Similarly as above, we set $Y_{ii}(0) = Y_{ii}(1) = 0$ for all $i \in [n]$.
We here implicitly assume the stable unit treatment value assumption (SUTVA), which rules out treatment spillovers across dyads.
Importantly, this does not mean that treatments and outcomes are independent across dyads.
For example, in our framework, $U_i$ may affect both the treatment $A_{ij}$ for dyad $(i,j)$ and the potential outcome $Y_{ik}(a)$ for another dyad $(i,k)$.
Thus, dependence across dyads sharing a common unit is allowed through unit-level attributes.

What SUTVA rules out is the direct dependence of dyadic potential outcomes on the treatments of other dyads.
Once such spillovers are allowed, the potential outcome may depend on the treatment assignments of many other dyads, and potentially on the entire treatment configuration of the network.
Although this type of spillover may be empirically relevant in dyadic settings, without strong additional restrictions, it creates an enormous number of potential outcomes and renders identification and estimation intractable.
A tractable specification may be possible in special cases where spillovers arise only through a small number of known important partners; for details on such a model, see Appendix \ref{app:relax_sutva}.
\medskip

As the main causal parameter of interest, we define the \textit{dyadic} ATE as follows:
\begin{align*}
	\tau_{ij} 
	\equiv \bbE[ Y_{ij}(1) - Y_{ij}(0) \mid P_{ij} ]
	\qquad
	\text{for $i \in [n]$, $j \neq i$}. 
\end{align*}
Other causal parameters of interest would include the \textit{individual} ATE for unit $i$ and the \textit{global} ATE, which are respectively defined as
\begin{align*}
	\bar \tau_i^{\mathrm{IATE}}
	\equiv \frac{1}{n - 1} \sum_{j \neq i} \tau_{ij},
	\qquad 
	\bar \tau^{\mathrm{GATE}}
	\equiv \frac{1}{n} \sum_{i \in [n]} \bar \tau_i^{\mathrm{IATE}}.
\end{align*}
Since both individual and global ATEs can be estimated once the dyadic ATEs are estimated, in the following, we mainly focus on the dyadic ATE.
Since the dyadic ATE $\tau_{ij}$ is indexed by the propensity score $P_{ij}$, estimating $\tau_{ij}$ is empirically useful for understanding treatment effect heterogeneity across dyads.
Specifically, it can capture returns-to-selection patterns: if $\tau_{ij}$ tends to be larger for dyads with higher $P_{ij}$, it implies that those more likely to receive the treatment also tend to benefit more from it.
This interpretational advantage of conditioning on the propensity score has been similarly highlighted in recent studies (e.g., \citealp{zhou2019marginal}).

Meanwhile, since $P_{ij}$ can be a nontrivial function of $(U_i,U_j)$, it would also be natural to consider the ``finer'' counterpart as follows:
\begin{align*}
	\tau_{ij}^*
	\equiv
	\bbE\left[ Y_{ij}(1) - Y_{ij}(0) \mid U_i, U_j \right]
	\qquad
	\text{for $i \in [n]$, $j \neq i$}.
\end{align*}
Conceptually, $\tau_{ij}^*$ is closer in spirit to the marginal treatment effect parameter (e.g., \citealp{heckman2005structural}), as it is defined by conditioning on latent variables that affect treatment selection.
However, unlike the original marginal treatment effect framework where the conditioning latent variable acts as a measure of unobserved resistance to the treatment, $\tau_{ij}^*$ generally lacks a straightforward interpretation in our setting.
This is because the latent variable $U_i$ represents a composite factor whose role in treatment selection is unclear even with knowledge of underlying structural network model.
Nevertheless, the proof of the convergence rate below relies on an injectivity condition under which $\tau_{ij}$ and $\tau_{ij}^*$ coincide.
This condition is imposed only to derive the convergence rate of the proposed estimator, and is not required for the characterization of the dyadic ATE or the inference methods developed below.
See Assumption \ref{as:PO}(i) and Remark \ref{rem:alt} in the next section for more details.

\section{Estimation} \label{sec:method}

\subsection{Estimation procedure} \label{subsec:estimation}

We first introduce the following two assumptions.

\begin{assumption}[Selection on unit-level attributes] \label{as:independence1}
	For all $i \in [n]$, $j \neq i$, and $a \in \{0,1\}$, the potential outcome $Y_{ij}(a)$ is generated according to \eqref{eq:model}, where $\xi_{ij}(a)$ is conditionally independent of $U_{ij}$ given $(U_i, U_j)$.
\end{assumption}

\begin{assumption}[Overlap] \label{as:overlap}
	There exist $\underline{C}_P, \overline{C}_P \in (0, 1)$ such that $P_{ij} \in [\underline{C}_P, \overline{C}_P]$ for all $i \in [n]$ and $j \neq i$.
\end{assumption}

While Assumption \ref{as:overlap} is a standard overlap condition, Assumption \ref{as:independence1} should be carefully examined case by case.
Assumption \ref{as:independence1} implies that $Y_{ij}(a)$ is conditionally independent of $A_{ij}$ given $(U_i, U_j)$.
That is, only the unit-level latent factors $(U_i, U_j)$ can be the source of endogeneity, so that the potential outcomes and treatment are correlated only through $(U_i,U_j)$.
Assumption \ref{as:independence1} is violated when there are latent dyad-level factors that affect both the potential outcomes and the treatment.
For instance, in international trade applications, such factors may include historical ties, diplomatic relationships, or other bilateral affinities, which serve as confounders that are not reducible to independent country-level attributes.
In this case, addressing endogeneity would require alternative identification strategies, including instrumental variables.

Under these assumptions, we obtain the following result.
\begin{proposition}[Characterization of dyadic ATE] \label{prop:DATE}
	Under Assumptions \ref{as:treatmentmat} -- \ref{as:overlap}, for all $i \in[n]$, $j \neq i$, we have
	\begin{align*}
		\tau_{ij} = m_{ij}(1) - m_{ij}(0),
	\end{align*}
	where $m_{ij}(a) \equiv \bbE[ Y_{ij} \mid A_{ij} = a, P_{ij}]$ for $a \in \{ 0, 1 \}$.
\end{proposition}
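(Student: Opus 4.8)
The plan is to establish the classical propensity-score unconfoundedness result of Rosenbaum and Rubin, adapted to the latent attributes $(U_i, U_j)$. First I would use the consistency relation $Y_{ij} = A_{ij} Y_{ij}(1) + (1 - A_{ij}) Y_{ij}(0)$: conditioning on the event $\{A_{ij} = a\}$ makes the observed and potential outcomes coincide, so that $m_{ij}(a) = \bbE[ Y_{ij}(a) \mid A_{ij} = a, P_{ij}]$. The entire argument then reduces to showing that the conditioning event $\{A_{ij} = a\}$ can be dropped, i.e.\ that
\[
Y_{ij}(a) \indep A_{ij} \mid P_{ij}.
\]

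The key step is to upgrade the unit-level unconfoundedness of Assumption \ref{as:independence1} to unconfoundedness given the scalar propensity score, exploiting that by Assumption \ref{as:treatmentmat} the score $P_{ij} = f(U_i, U_j)$ is a deterministic, $\sigma(U_i, U_j)$-measurable function of $(U_i, U_j)$ with $P_{ij} = \bbE[A_{ij} \mid U_i, U_j]$. For binary $A_{ij}$, I would verify that $P_{ij}$ is a balancing score by computing, for each $a \in \{0, 1\}$,
\[
\bbP(A_{ij} = 1 \mid Y_{ij}(a), P_{ij}) = \bbE\big[ \bbP(A_{ij} = 1 \mid Y_{ij}(a), U_i, U_j) \mid Y_{ij}(a), P_{ij} \big] = \bbE[ P_{ij} \mid Y_{ij}(a), P_{ij}] = P_{ij},
\]
where the first equality is the tower property combined with the $\sigma(U_i, U_j)$-measurability of $P_{ij}$ (so that conditioning on $(Y_{ij}(a), P_{ij}, U_i, U_j)$ collapses to conditioning on $(Y_{ij}(a), U_i, U_j)$), and the second uses Assumption \ref{as:independence1} to remove $Y_{ij}(a)$ from the inner conditioning. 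Since $\bbP(A_{ij} = 1 \mid P_{ij}) = \bbE[ \bbE[A_{ij} \mid U_i, U_j] \mid P_{ij}] = P_{ij}$ as well, the two conditional probabilities agree a.s., which for a binary variable is exactly the asserted conditional independence.

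With this in hand, the conclusion follows quickly. Assumption \ref{as:overlap} guarantees that $\bbP(A_{ij} = a \mid P_{ij})$ is bounded away from zero (it equals $P_{ij} \ge \underline{C}_P$ when $a = 1$ and $1 - P_{ij} \ge 1 - \overline{C}_P$ when $a = 0$), so that $m_{ij}(a)$ is well defined. The conditional independence just established then yields $m_{ij}(a) = \bbE[ Y_{ij}(a) \mid A_{ij} = a, P_{ij}] = \bbE[ Y_{ij}(a) \mid P_{ij}]$, and subtracting the two cases gives $m_{ij}(1) - m_{ij}(0) = \bbE[ Y_{ij}(1) - Y_{ij}(0) \mid P_{ij}] = \tau_{ij}$.

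I expect the main obstacle to be the measure-theoretic care in the balancing-score step rather than any substantive difficulty: specifically, justifying that conditioning on $(Y_{ij}(a), P_{ij}, U_i, U_j)$ reduces to conditioning on $(Y_{ij}(a), U_i, U_j)$ because $P_{ij}$ is $\sigma(U_i, U_j)$-measurable, and arguing that equality of the two conditional expectations of the binary $A_{ij}$ delivers full conditional independence and not merely mean independence. The remaining manipulations are routine applications of the law of iterated expectations.
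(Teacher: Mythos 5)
Your proposal is correct and follows essentially the same route as the paper's own proof: both establish that $P_{ij}$ is a balancing score by applying the law of iterated expectations, using the $\sigma(U_i,U_j)$-measurability of $P_{ij}$ and Assumption \ref{as:independence1} to show $\bbP(A_{ij}=1 \mid \text{potential outcomes}, P_{ij}) = P_{ij}$, and then deduce conditional independence and drop the conditioning event $\{A_{ij}=a\}$. The only cosmetic difference is that the paper conditions on the pair $(Y_{ij}(0), Y_{ij}(1))$ jointly while you condition on each $Y_{ij}(a)$ separately, and you spell out the consistency and overlap steps that the paper leaves implicit.
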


Proposition \ref{prop:DATE} implies that we can estimate the dyadic ATE through a nonparametric regression of the outcome on the propensity score separately for the treatment and control groups.
However, since $\{P_{ij}\}$ are unknown, this is not feasible.
Alternatively, if we could identify a sub-sample satisfying $\{i' \in [n]: A_{i'j} = a, P_{i'j} \approx P_{ij}\}$, then computing the average of $Y_{i'j}$'s over this subset would give us a valid estimate of $m_{ij}(a)$.
We pursue this task by adopting the approach of \cite{zhang2017estimating}.
Let $d(i, i')$ denote the squared $\ell_2$ distance between graphon slices:
\begin{align}\label{eq:slice}
	d(i, i') \equiv \int_0^1 \left| f(U_i, v) - f(U_{i'}, v)\right|^2 \text{d}v.
\end{align}
Intuitively, if $d(i,i')$ is close to zero, we would have $P_{ij} \approx P_{i'j}$ for all $j \neq i,i'$.
Thus, if $d(i,i')$ were known, we can estimate $m_{ij}(a)$ by computing the average outcome over the subsample $\{i' \in [n]: A_{i'j} = a, d(i,i') \approx 0\}$.
Of course, directly estimating the value of 
$d(i,i')$ is not feasible.
However, for the purpose of selecting a neighbourhood of $i$, it is not necessary to know the precise value of $d(i,i')$; rather, it suffices to consider a tractable upper bound.
\cite{zhang2017estimating} demonstrated that the following pseudometric $\tilde d(i,i')$ serves as a viable approximate upper bound, in the sense that $\tilde d(i,i') \approx 0$ implies $d(i,i') \approx 0$ with a high probability:
\begin{align}\label{eq:dtilde}
	\tilde d(i, i')
	\equiv \max_{k \neq i, i'} \left| \frac{1}{n} \sum_{l \in [n]} (A_{il} - A_{i'l}) A_{kl} \right|.
\end{align}
Then, using this pseudometric in place of $d(i,i')$, we can consider a general kernel regression procedure, which we call the \textit{neighbourhood kernel smoothing} estimation.

The pseudometric $\tilde d$ is designed to identify units with similar graphon slices $f(U_i, \cdot)$, but not necessarily similar underlying latent variables $U_i$.
This is because, depending on the functional form of the graphon $f$, similar values of $P_{ij}$ can be generated even when the underlying pairs $(U_i, U_j)$ are substantially different.
Section 2.2 of \citet{zhang2017estimating} also highlights this point.

To build intuition for the pseudometric $\tilde d$, Figure \ref{fig:intuition} depicts a simple six-unit network, while Table \ref{tab:intuition} reports the corresponding values of $\tilde d(i,i')$.
In this example, units 1 and 2  have exactly the same connection pattern: both are connected to units 3, 4, and 5, and not connected to unit 6.
Accordingly, the pseudometric assigns $\tilde d(1, 2) = \tilde d(2, 1) = 0$.
By contrast, units with more distinct connection patterns, such as units 1 and 4, have a larger pseudometric distance, suggesting that their graphon slices should be substantially different.

\begin{figure}[t]
	\centering
	\begin{minipage}[c]{0.4\textwidth}
		\centering
		\begin{tikzpicture}[
			scale=0.7,
			transform shape,
			main node/.style={circle, draw=black, minimum size=1.5cm, font=\LARGE},
			edge/.style={thick}
			]
			\node[main node] (1) at (0, 3) {1};
			\node[main node] (3) at (3, 3) {3};
			\node[main node] (2) at (6, 3) {2};
			\node[main node] (4) at (0, 0) {4};
			\node[main node] (5) at (3, 0) {5};
			\node[main node] (6) at (6, 0) {6};
			\draw[edge] (1) -- (3);
			\draw[edge] (3) -- (2);
			\draw[edge] (1) -- (4);
			\draw[edge] (1) -- (5);
			\draw[edge] (3) -- (5);
			\draw[edge] (3) -- (6);
			\draw[edge] (2) -- (4);
			\draw[edge] (2) -- (5);
		\end{tikzpicture}
		\captionof{figure}{Example network}
		\label{fig:intuition}
	\end{minipage}
	\hfill
	\begin{minipage}[c]{0.55\textwidth}
		\centering
		\captionof{table}{Pairwise pseudometric values $\tilde d(i, i')$}
		\label{tab:intuition}
		\vspace{2mm}
		\small
		\begin{tabular}{c|rrrrrr}
			\diagbox{$i$}{$i'$} & \multicolumn{1}{c}{1} & \multicolumn{1}{c}{2} & \multicolumn{1}{c}{3} & \multicolumn{1}{c}{4} & \multicolumn{1}{c}{5} & \multicolumn{1}{c}{6} \\
			\hline
			1 & -- & 0 & 0.333 & 0.5 & 0.333 & 0.333 \\
			2 & 0 & -- & 0.333 & 0.5 & 0.333 & 0.333 \\
			3 & 0.333 & 0.333 & -- & 0.167 & 0.167 & 0.333 \\
			4 & 0.5 & 0.5 & 0.167 & -- & 0.167 & 0.333 \\
			5 & 0.333 & 0.333 & 0.167 & 0.167 & -- & 0.333 \\
			6 & 0.333 & 0.333 & 0.333 & 0.333 & 0.333 & -- \\
		\end{tabular}
	\end{minipage}
\end{figure}

The proposed neighbourhood kernel smoothing estimation proceeds as follows.
For a kernel weighting function $\mathcal{K}: \mathbb{R}_+ \to \mathbb{R}_+$ and a bandwidth $b > 0$, we compute the propensity score estimator as follows:
\begin{align} \label{eq:Ptilde}
	\tilde P_{ij}
	\equiv \sum_{i' \neq i} w(i, i') A_{i'j}
	\qquad \text{with} \quad 
	w(i, i') \equiv \left. \mathcal{K}\left( \frac{\tilde d(i, i')}{b} \right) \middle/ \sum_{l \neq i} \mathcal{K}\left( \frac{\tilde d(i, l)}{b} \right) \right..
\end{align}
However, this estimator suffers from two issues.
First, while we have assumed $P_{ij} = P_{ji}$ in Assumption \ref{as:treatmentmat}, $\tilde P_{ij} \neq \tilde P_{ji}$ in general.
Second, even under Assumption \ref{as:overlap}, $\tilde P_{ij}$ can be very close to zero or one in practice.
To address these issues, we consider a trimmed and symmetrized version of $\tilde P_{ij}$:
\begin{equation} \label{eq:PSest}
	\hat P_{ij}
	= \hat P_{ji}
	\equiv \frac{ \check P_{ij} + \check P_{ji} }{ 2 }
	\qquad \text{with} \quad 
	\check P_{ij} \equiv 
	\begin{cases}
		\kappa_L & \text{if $\tilde P_{ij} \le \kappa_L$,} \\
		\kappa_U & \text{if $\tilde P_{ij} \ge \kappa_U$,} \\
		\tilde P_{ij} & \text{otherwise,}
	\end{cases}
\end{equation}
where $0 < \kappa_L < \kappa_U < 1$ are user-specified trimming thresholds.
With this propensity score estimator, the neighbourhood kernel smoothing estimators for $m_{ij}(0)$ and $m_{ij}(1)$ are given by
\begin{equation} \label{eq:mest}
	\hat m_{ij}(0)
	\equiv \frac{ \sum_{i' \neq i} w(i, i') (1 - A_{i'j}) Y_{i'j} }{ 1 - \hat P_{ij} },
	\qquad 
	\hat m_{ij}(1)
	\equiv \frac{ \sum_{i' \neq i} w(i, i') A_{i'j} Y_{i'j} }{ \hat P_{ij} },
\end{equation}
respectively.
Finally, we estimate the dyadic ATE by
\begin{equation} \label{eq:DATEest}
	\hat \tau_{ij}
	\equiv \hat m_{ij}(1) - \hat m_{ij}(0).
\end{equation}
Moreover, the estimators for the individual and global ATE are obtained by
\begin{equation} \label{eq:IATEest}
	\hat \tau_i^{\mathrm{IATE}}
	\equiv \frac{1}{n - 1} \sum_{j \neq i} \hat \tau_{ij},
	\qquad 
	\hat \tau^{\mathrm{GATE}}
	\equiv \frac{1}{n} \sum_{i \in [n]} \hat \tau_i^{\mathrm{IATE}},
\end{equation}
respectively.

\begin{remark}[Related pseudometric-based approaches]
	\label{rem:applications}
	The pseudometric \eqref{eq:dtilde} has been used in several related settings beyond graphon estimation.
	In the context of regression models on networks, \citet{auerbach2022identification} and \citet{zeleneev2026identification} use closely related pseudometric measures to control for latent unit-level heterogeneity.
	In another context, for causal matrix completion based on panel data, \citet{athey2025identification} and \citet{deaner2025inferring} use the same type of pseudometric to estimate the ATE on the treated.
	In particular, our neighbourhood kernel smoothing estimator is closely related to the estimator for the conditional mean of the untreated potential outcome in \citet{deaner2025inferring}.
	They construct a pseudometric based on the pre-treatment outcome matrix generated by a non-linear factor model, whereas our approach applies it to the treatment matrix generated from the graphon $f$.
	How to incorporate pseudometric information for the outcome matrix $Y$, together with the pseudometric \eqref{eq:dtilde} for the treatment matrix $A$, into our setting should be an interesting topic for future research.
\end{remark}

\begin{remark}[Other estimation approaches]\label{rem:otherest}
	While this paper focuses on the inverse probability weighting estimators $\hat m_{ij}(0)$ and $\hat m_{ij}(1)$ in \eqref{eq:DATEest}, Proposition \ref{prop:DATE} also suggests that once the propensity score $P_{ij}$ is estimated via neighbourhood smoothing, the dyadic ATE could be estimated using alternative methods, such as propensity score matching and regression.
	We leave this promising direction for future research.
\end{remark}

\subsection{Rate of convergence} \label{subsec:convergence}

In this subsection, we derive the convergence rate of the neighbourhood kernel smoothing estimator.
We impose the following assumptions on the kernel weighting function $\mathcal{K}$, the bandwidth $b$, and the trimming thresholds $\kappa_L$ and $\kappa_U$.

\begin{assumption}[Kernel] \label{as:kernel}
	The kernel function $\mathcal{K}: \mathbb{R}_+ \to \mathbb{R}_+$ satisfies 
	(i) $\mathcal{K}(u) = 0$ for all $u > 1$, and
	(ii) there exist $\underline{C}_{\mathcal{K}}, \overline{C}_{\mathcal{K}} \in (0, \infty)$ such that $\mathcal{K}(u) \in [\underline{C}_{\mathcal{K}}, \overline{C}_{\mathcal{K}}]$ for all $u \in [0, 1]$.
\end{assumption}

\begin{assumption}[Bandwidth] \label{as:bw}
	Letting $h \equiv C_0 \sqrt{(\log n) / n}$ with some $C_0 \in (0, 1]$, the bandwidth $b$ is set to the $h$-th sample quantile of $\{ \tilde d(i, i'): i' \in [n], i' \ne i \}$.
\end{assumption}

\begin{assumption}[Trimming] \label{as:trim}
	The trimming thresholds $\kappa_L$ and $\kappa_U$ satisfy $0 < \kappa_L < \underline{C}_P < \overline{C}_P  < \kappa_U < 1$, where $\underline{C}_P$ and $\overline{C}_P$ are as in Assumption \ref{as:overlap}.
\end{assumption}

Assumption \ref{as:kernel} requires the kernel function $\mathcal{K}$ to be supported on the unit interval and bounded away from zero and above by a constant on the support.
The uniform kernel function is a typical example that satisfies this assumption.
Furthermore, commonly used compact support kernel functions can satisfy Assumption \ref{as:kernel} if we make slight modifications.
For instance, in line with Assumption \ref{as:kernel}, the triangular and Epanechnikov kernel functions can be respectively modified as 
\begin{align*}
	\mathcal{K}_{\mathrm{tri}}(u) 
	& \equiv (\underline{C}_{\mathcal{K}} + 1 - u) \bm{1}\{ u \le 1 \},\\
	\mathcal{K}_{\mathrm{epa}}(u) 
	& \equiv [\underline{C}_{\mathcal{K}} + 0.75 (1 - u^2)] \bm{1}\{ u \le 1 \}.
\end{align*}
The motivation behind this modification stems from the discrete nature of $\tilde d(i, i')$.
That is, due to the limited variation in the value of $\tilde d(i,i')$, particularly for small $n$, it is often the case that a certain number of observations who are exactly $b$ distant away from $i$ exist.
Consequently, when using standard kernel weight functions whose value degenerates to zero at the boundary, the resulting number of observations involved in the estimation can unintentionally be small.
We numerically check this issue in Appendix \ref{sec:simulation2}.

Assumptions \ref{as:bw} and \ref{as:trim} are technical conditions needed to derive the rate of convergence for our estimator.
Assumption \ref{as:bw} imposes the exact same condition as the one introduced in Theorem 1 of \cite{zhang2017estimating}.
In our numerical simulations, we confirm that the estimator performs reasonably well with $C_0 = 1$, $\kappa_L = 0.01$, and $\kappa_U = 0.99$.

Next, we assume that the graphon $f$ is a piecewise-Lipschitz function, precisely following Definition 2 and the assumption of Theorem 1 in \citet{zhang2017estimating}.

\begin{definition}[Piecewise-Lipschitz graphon family] \label{def:Lipschitz}
	For any $\delta, L > 0$, let $\mathcal{F}_{\delta; L}$ denote a family of piecewise-Lipschitz graphon functions $f:[0, 1]^2 \to [0, 1]$ such that (i) there exists an integer $K \ge 1$ and a sequence $0 = x_0 < \cdots < x_K = 1$ satisfying $\min_{0 \le k \le K - 1} (x_{k+1} - x_k) \ge \delta$, and (ii) both $| f(u_1, v) - f(u_2, v) | \le L | u_1 - u_2 |$ and $| f(u, v_1) - f(u, v_2) | \le L | v_1 - v_2 |$ hold for all $u, u_1, u_2 \in [x_{k_1}, x_{k_1 + 1}]$, $v, v_1, v_2 \in  [x_{k_2}, x_{k_2 + 1}]$, and $0 \le k_1, k_2 \le K - 1$.
\end{definition}

\begin{assumption}[Piecewise-Lipschitz graphon] \label{as:Lipschitzpropensity score}
	The graphon $f$ is an element of $\mathcal{F}_{\delta_n; L}$ satisfying $\delta_n / \sqrt{(\log n) / n} \to \infty$ for a global constant $L > 0$ and the number of pieces $K_n \ge 1$ such that $K_n \to \infty$ while $\min_{1 \le k \le K_n} |I_k| / \sqrt{(\log n) / n} \to \infty$, where $I_k = [x_{k-1}, x_k)$ for $1 \le k \le K_n-1$, and $I_{K_n} = [x_{K_n-1}, x_{K_n}]$
\end{assumption}

The next assumption imposes certain restrictions on the data generating process for the potential outcomes.

\begin{assumption}[Potential outcome] \label{as:PO}
	(i) For all $i \in [n]$, $j \neq i$, and $a \in \{0, 1\}$, 
	\begin{align*}
		\bbE[ Y_{ij}(a) \mid U_i, U_j ] = \bbE[ Y_{ij}(a) \mid P_{ij} ]
		\;\; \text{(a.s.)}
	\end{align*}
	(ii) There exists $C_Y > 0$ such that $|Y_{ij}(a)| \le C_Y$ for all $i \in [n]$, $j \neq i$, and $a \in \{ 0, 1 \}$.
	(iii) There exists a constant $L_Y > 0$ such that
	\begin{align*}
		| \bbE[ Y_{i'j}(a) \mid P_{i'j} ] - \bbE[ Y_{ij}(a) \mid P_{ij} ] | \le L_Y | P_{i'j} - P_{ij} | \;\; \text{(a.s.)}
	\end{align*}
	for all $i \in [n]$, $i' \in \mathcal{N}_i$, and $a \in \{ 0, 1 \}$, where $\mathcal{N}_i \equiv \{ i' \in[n]: i' \neq i, \tilde d(i, i') \le b \}$.
\end{assumption}

Assumption \ref{as:PO}(i) imposes an injectivity condition that the conditional mean $\bbE[ Y_{ij}(a) \mid U_i, U_j ]$ depends on $(U_i,U_j)$ only through $P_{ij}$.
This condition ensures that $\tau_{ij} = \tau_{ij}^*$, where $\tau_{ij}^*$ is defined in Section \ref{sec:setup}.
While this assumption restricts the dependence structure of potential outcomes, it does not impose additional restrictions on treatment network formation.
We use this condition to establish Theorem \ref{thm:convergence} below.
It might be possible to obtain the same result in the theorem under weaker alternative conditions, however, relaxing this assumption is left for future work; see the proof of Lemma \ref{lem:J12} in Appendix \ref{sec:proof} for details.
In Assumption \ref{as:PO}(ii), we assume that the potential outcomes are bounded.
Although this is stronger than necessary, it significantly simplifies our theoretical analysis.
Assumption \ref{as:PO}(iii) is a high-level condition.
This can be satisfied, for instance, when $\{ Y_{ij}(a) \}$ are identically distributed such that $\bbE[ Y_{ij}(a) \mid P_{ij} ] = g_a(P_{ij})$, where $g_a$ is $L_Y$-Lipschitz on $[0, 1]$.

\begin{assumption}[Conditional independence] \label{as:independence2}
	For all mutually distinct $i, j, k \in [n]$ and all $a \in \{ 0, 1 \}$,
	(i) $(U_{ij}, \xi_{ij}(a))$ is independent of $(U_k, U_{kj}, \xi_{kj}(a))$ conditional on $(U_i, U_j)$, and
	(ii) $(U_j, U_{ij}, U_{kj}, \xi_{ij}(a), \xi_{kj}(a))$ satisfies mutual independence across $j$ conditional on $(U_i, U_k)$. 
\end{assumption}

Assumption \ref{as:independence2} imposes mild technical independence conditions, which are introduced to facilitate the application of Hoeffding's inequality in the proof of Lemma \ref{lem:J12}.
This assumption strengthens the independence conditions in Assumptions \ref{as:treatmentmat} and \ref{as:independence1}.
For instance, it holds if $\{ U_i \}$, $\{ U_{ij} \}$, and $\{ \xi_{ij}(a) \}$ are independent of each other, and furthermore, both $\{ U_{ij} \}$ and $\{ \xi_{ij}(a) \}$ are independent across $j$.

The next theorem gives the convergence rate of the neighbourhood kernel smoothing estimator for the dyadic ATE with respect to the the normalized $(2, \infty)$ matrix norm.
For $n \times n$ matrices $A$ and $B$ with zero diagonals, we define
\begin{align*}
	d_{2,\infty}(A, B)
	& \equiv \max_{1 \le i \le n} \| A_{i \cdot} - B_{i \cdot} \|_2 / \sqrt{n - 1},
\end{align*}
where $\| \cdot \|_2$ denotes the Euclidean norm.

\begin{theorem}[Rate of convergence] \label{thm:convergence}
	Under Assumptions \ref{as:treatmentmat} -- \ref{as:independence2}, 
	\begin{align*}
		\big[ d_{2,\infty}(\hat \tau, \tau) \big]^2 
		\lesssim_{\bbP} \sqrt{ (\log n) / n },
	\end{align*}
	where $\hat \tau = (\hat \tau_{ij})$ and $\tau = (\tau_{ij})$.
\end{theorem}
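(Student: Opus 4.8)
The plan is to reduce the matrix-norm bound to a dyad-wise analysis and then carry out a bias--variance decomposition of each ratio estimator, the key input being the neighborhood-consistency result of \cite{zhang2017estimating}. Since
\begin{align*}
	\big[ d_{2,\infty}(\hat \tau, \tau) \big]^2 = \max_{i \in [n]} \frac{1}{n-1}\sum_{j \neq i} \big( \hat \tau_{ij} - \tau_{ij} \big)^2
\end{align*}
and $\hat\tau_{ij} - \tau_{ij} = [\hat m_{ij}(1) - m_{ij}(1)] - [\hat m_{ij}(0) - m_{ij}(0)]$ by Proposition \ref{prop:DATE} and \eqref{eq:DATEest}, it suffices by the triangle inequality to show that $\max_{i}\frac{1}{n-1}\sum_{j\neq i}(\hat m_{ij}(a) - m_{ij}(a))^2 \lesssim_{\bbP} \sqrt{(\log n)/n}$ for each $a \in \{0,1\}$. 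I treat $a=1$; the case $a=0$ is identical after replacing $A_{i'j}$ with $1-A_{i'j}$ and $\hat P_{ij}$ with $1-\hat P_{ij}$. Writing the estimator as a ratio, I use Assumption \ref{as:overlap} together with the consistency of $\hat P_{ij}$ (established along the way) to argue that the denominator stays bounded away from zero on an event of probability tending to one, so that the ratio linearizes and the analysis reduces to the numerator
\begin{align*}
	\sum_{i'\neq i} w(i,i') A_{i'j}\big(Y_{i'j} - m_{ij}(1)\big),
\end{align*}
plus a symmetrization remainder $m_{ij}(1)(\tilde P_{ij}-\tilde P_{ji})/2$ arising from \eqref{eq:PSest}, which is of the same order as the estimation error of $\tilde P$ and hence negligible.

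The crucial ingredient is the neighborhood-selection guarantee from \cite{zhang2017estimating}: under Assumptions \ref{as:kernel}--\ref{as:Lipschitzpropensity score}, with probability tending to one and uniformly over $i \in [n]$, every neighbor $i' \in \mathcal{N}_i$ satisfies $d(i,i') \lesssim \sqrt{(\log n)/n}$, while the neighborhood size $|\mathcal{N}_i|$ is of order $nh \asymp \sqrt{n\log n}$. This is what converts proximity in the computable pseudometric $\tilde d$ into proximity of the graphon slices, hence, through $P_{ij} = f(U_i,U_j)$, into the smallness of $|P_{i'j}-P_{ij}|$ needed to control the outcome regression.

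I then decompose the numerator, conditionally on the latent variables $\{U_i\}$ and the neighborhood structure, into a conditional-mean (bias) part and a conditionally mean-zero (variance) part. For the bias, Assumption \ref{as:PO}(iii) bounds the conditional mean of each summand by $L_Y|P_{i'j}-P_{ij}|$; squaring, averaging over $j$, and applying the Cauchy--Schwarz inequality (the weights sum to one) yields a bound of order $\sum_{i'}w(i,i')\,d(i,i')$, which is $\lesssim_{\bbP} \sqrt{(\log n)/n}$ by the neighborhood guarantee. For the variance, boundedness of the outcomes (Assumption \ref{as:PO}(ii)) together with the conditional independence structure of Assumptions \ref{as:treatmentmat} and \ref{as:PO}(i) makes the fluctuation a weighted sum of conditionally weakly dependent bounded terms with conditional variance of order $\sum_{i'}w(i,i')^2 \asymp 1/|\mathcal{N}_i| \asymp 1/\sqrt{n\log n}$; a Bernstein-type concentration inequality combined with a union bound over the $O(n^2)$ dyads gives $\max_{i,j}(\text{fluctuation})^2 \lesssim_{\bbP} (\log n)/(nh) \asymp \sqrt{(\log n)/n}$. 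Both the squared bias and the squared variance are thus of order $\sqrt{(\log n)/n}$, and combining them (and noting that the symmetrization and linearization remainders are of no larger order) yields the claim after taking the maximum over $i$.

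The main obstacle is the statistical dependence between the data-driven weights $w(i,i')$ and the summands $A_{i'j}Y_{i'j}$: the weights are functions of the entire treatment matrix $A$ through $\tilde d$, so they are correlated with the very terms being averaged, and the fluctuation is not a sum of independent variables. The mitigating observation is that any single entry $A_{i'j}$ enters $\tilde d(i,i')$ only through an $O(1/n)$ contribution to the inner average, so the weights are nearly independent of each summand; making this rigorous requires conditioning on the latent attributes $\{U_i\}$ and working on the high-probability event where $\mathcal{N}_i$ is correctly identified, on which the selection is essentially measurable with respect to $\{U_i\}$ and thus decoupled from the idiosyncratic noise $U_{ij}$ and $\xi_{ij}(a)$ entering $A_{i'j}$ and $Y_{i'j}$. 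Carefully adapting this decoupling from the pure edge-probability setting of \cite{zhang2017estimating} to the present outcome-regression ratio is the technical heart of the proof.
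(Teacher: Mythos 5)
Your overall skeleton matches the paper's: the reduction to $\max_i \frac{1}{n-1}\sum_{j\neq i}(\hat m_{ij}(a)-m_{ij}(a))^2$, the linearization of the ratio $\hat\mu_{ij}(a)/\hat P_{ij}$ using Assumption \ref{as:overlap} plus consistency of $\hat P$, and the bias bound via Assumption \ref{as:PO}(iii) combined with the neighborhood guarantee (Lemma 2 of \cite{zhang2017estimating}, i.e.\ $\frac{1}{n-1}\sum_{j\neq i}(P_{i'j}-P_{ij})^2 \lesssim_{\bbP}\sqrt{(\log n)/n}$ uniformly over $i'\in\mathcal{N}_i$) are all exactly what the paper does. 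The gap is in your variance step. You propose to apply a Bernstein inequality to the weighted fluctuation, conditionally on the latent variables and the neighborhood structure, and you correctly identify the obstacle: the kernel weights $w(i,i')$ are functions of the whole matrix $A$ and hence of the very summands being averaged. But your proposed mitigation does not close this. Conditioning on the event that $\mathcal{N}_i$ is ``correctly identified'' does not make the weights measurable with respect to $\{U_i\}$ --- the realized $\tilde d(i,i')$, hence $\mathcal{N}_i$ and the kernel weights, genuinely depend on the dyad-level noise $\{U_{ij}\}$; moreover, conditioning on such an event distorts the conditional law of the $A_{i'j}Y_{i'j}$, so the summands are no longer conditionally mean-zero or independent, and Bernstein cannot be invoked. ``Essentially measurable'' and ``nearly independent because each entry contributes $O(1/n)$ to $\tilde d$'' are precisely the statements that would need a proof, and you leave them as the ``technical heart''; as written, the concentration step does not go through.

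The paper resolves this with a different, and cleaner, device that requires no decoupling at all. First, each summand is centered at its \emph{own} conditional mean $\theta_{i'j}$ (not at $\theta_{ij}$ as in your numerator), so the fluctuation is $\sum_{i'\in\mathcal{N}_i}w(i,i')(A_{i'j}Z_{i'j}-\theta_{i'j})$. Expanding the square over $j$ gives a diagonal part and a cross part. The diagonal part is bounded \emph{deterministically} by $C/\sqrt{n\log n}$ using only $w(i,i')\lesssim 1/\sqrt{n\log n}$ and $\sum_{i'}w(i,i')=1$ (Assumptions \ref{as:kernel} and \ref{as:bw}). For the cross part, the weights are pulled out entirely:
\begin{align*}
|J_{12}(i)| \le \sum_{i'}\sum_{i''\neq i'} w(i,i')\,w(i,i'')\,\max_{(k,k'):\,k\neq k'\neq i}\left| \frac{1}{n-3}\sum_{j \neq i,k,k'} (A_{kj}Z_{kj}-\theta_{kj})(A_{k'j}Z_{k'j}-\theta_{k'j}) \right|,
\end{align*}
and the maximum is over \emph{all} pairs, a quantity that no longer involves the weights. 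Conditional on $(U_{k},U_{k'})$ the inner summands are IID and mean zero, so Hoeffding's inequality plus a union bound over the $O(n^2)$ pairs gives $\sqrt{(\log n)/n}$ for the max, and since the weight products sum to at most one, the randomness of which neighbors receive positive weight is simply irrelevant. This ``bound the weighted sum by the maximum over all pairs'' step is the key idea missing from your proposal; with it, your remaining structure (ratio linearization, bias term, aggregation over $i$) assembles into the paper's proof.
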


From Theorem \ref{thm:convergence} and Jensen's inequality, we can obtain a bound for the convergence rate of the individual ATE estimator as follows:
\begin{equation}\nonumber
	\frac{1}{n}\sum_{i = 1}^n \left|\hat \tau_i^{\mathrm{IATE}} - \tau_i^{\mathrm{IATE}}\right|^2
	\lesssim_{\bbP} \sqrt{ (\log n) / n }.
\end{equation}
Similarly, we can easily observe that $|\hat \tau^\text{GATE} - \tau^\text{GATE}|^2 \lesssim_{\bbP} \sqrt{ (\log n) / n }$ holds.

\begin{remark}[Alternative propensity score estimators]\label{rem:alt}
	It is straightforward to see that the same bound as in Theorem \ref{thm:convergence} applies to $\|\hat{\tau}-\tau\|_F^2 / [n(n-1)]$, where $\|\cdot\|_F$ denotes the Frobenius norm.
	Under certain smoothness conditions on graphons, the minimax rate for graphon estimation under this norm is known to be $(\log n)/n$;
	see \cite{gao2015rate}, \cite{zhang2017estimating}, and \cite{gao2021minimax}.
	Hence, our bound is not optimal from that perspective.
	One example of a graphon estimator that achieves the optimal rate is the combinatorial least squares estimator of \citet{gao2015rate}, but it is often computationally infeasible in practice.
	Another example is the sorting-and-smoothing estimator of \citet{chan2014consistent}, which relies on an additional monotonicity condition.
	By contrast, the neighbourhood smoothing approach of \citet{zhang2017estimating} is particularly attractive because it does not aim to identify the graphon $f$ itself or the underlying latent variables $\{ U_i \}$, thereby avoiding such stringent assumptions.
	This flexibility is crucial in our setting, as it accommodates empirically plausible treatment network formation models, such as those illustrated in Examples \ref{ex:heterogeneity} -- \ref{ex:pairwise}.
	In such settings, where the first step estimates the dyadic propensity score rather than the graphon $f$ itself or the underlying latent variables $\{ U_i \}$, the most natural causal parameter would be our dyadic ATE $\tau_{ij}$.
\end{remark}

\section{Predictive Inference}\label{sec:inference}

The neighbourhood kernel smoothing estimators are easy to implement and achieve certain rates of convergence, but deriving their asymptotic distributions is highly challenging.
This difficulty arises from the complex dependence structure induced by the pseudometric $\tilde d$, which is defined using the entire network $A$.
Consequently, statistical inference for neighbourhood smoothing estimators remains an open question in the literature.

Accordingly, rather than conducting statistical inference directly on our estimators, we consider employing a conformal inference approach to construct a prediction set for a new outcome $Y_{i,n+1}$ conditional on $A_{i,n+1}=a$, for each $i \in [n]$ and $a \in \{0,1\}$, in a similar way to \cite{lei2018distribution}.
Formally, we aim to construct a set $\mathcal{C}_{1-\alpha}(i, a)$ for each $i \in [n]$ and $a \in \{ 0, 1 \}$, such that
\begin{align*}
	\bbP_{ia} \big( Y_{i,n+1}(a) \in \mathcal{C}_{1-\alpha}(i, a) \big) \ge 1 - \alpha,
\end{align*}
where $\alpha \in (0,1)$ is a nominal miscoverage level and $\bbP_{ia}$ denotes the probability law of $\{(Y_{ij},A_{ij}): j \in [n+1], j \neq i\}$ conditional on $U_i$ and $A_{i,n+1}=a$.

The resulting prediction sets $\mathcal{C}_{1-\alpha}(i,0)$ and $\mathcal{C}_{1-\alpha}(i,1)$ may reflect self-selection bias and generally do not have a pure causal interpretation, since they are prediction sets for $Y_{i,n+1}(a)$ conditional on $A_{i,n+1}=a$, rather than prediction sets for the potential outcome $Y_{i,n+1}(a)$ itself.
However, comparing them can still provide supportive evidence for the obtained treatment effect estimates.
For instance, when positive self-selection is suspected, if $\mathcal{C}_{1-\alpha}(i,0)$ and $\mathcal{C}_{1-\alpha}(i,1)$ overlap to a large extent, this may be viewed as supportive evidence that the treatment effect is weak or negligible.

The key requirement to implement a conformal inference is that the data are exchangeable.
In our setting, since $A_{ij} = \bm{1}\{ f(U_i, U_j) \ge U_{ij} \}$ and $Y_{ij}(a) = y_a(U_i, U_j, \xi_{ij}(a))$ where $U_i, U_{ij} \overset{\text{IID}}{\sim} \text{Uniform}(0, 1)$, if $\xi_{ij}(a)$'s are also IID, then the $i$-th row $\{ (Y_{ij}, A_{ij}): j \neq i \}$ becomes exchangeable over $j$ conditional on $U_i$.
This justifies the finite-sample validity of a row-wise conformal inference.

Our row-wise full conformal inference proceeds as follows.
For each $i \in [n]$ and $a \in \{ 0, 1 \}$, select a subsample $\{ Y_{ij}: j \in \mathcal{J}(i, a) \}$, where $\mathcal{J}(i, a) \equiv \{ j \neq i : A_{ij} = a \}$, and let $y \in \mathbb{R}$ be a generic candidate value for $Y_{i,n+1}(a)$.
We then compute the following conformity scores:
\begin{align*}
	R_{y,j}(i, a)
	& \equiv \left| Y_{ij} - \frac{1}{|\mathcal{J}(i, a)| + 1} \left( \sum_{j \in \mathcal{J}(i, a)} Y_{ij} + y \right)  \right|
	\qquad \text{for $j \in \mathcal{J}(i, a)$,}
	\\
	R_{y,n+1}(i, a) 
	& \equiv \left| y - \frac{1}{|\mathcal{J}(i, a)| + 1} \left( \sum_{j \in \mathcal{J}(i, a)} Y_{ij} + y \right) \right|.
\end{align*}
Then, the conformal prediction set for $Y_{i,n+1}(a)$ conditional on $A_{i,n+1} = a$ can be obtained by
\begin{align*}
	\mathcal{C}_{1-\alpha}^{\mathrm{full}}(i, a)
	\equiv \{ y \in \mathbb{R} : (|\mathcal{J}(i, a)| + 1) \pi_y(i, a) \le \lceil (1 - \alpha)(|\mathcal{J}(i, a)| + 1) \rceil \},
\end{align*}
where $\lceil \cdot \rceil$ is the ceiling function and
\begin{align*}
	\pi_y(i,a) 
	\equiv \frac{1}{|\mathcal{J}(i, a)| + 1} \left(1 + \sum_{j \in \mathcal{J}(i, a)} \bm{1} \{ R_{y,j}(i, a) \le R_{y,n+1}(i, a) \} \right).
\end{align*}

Alternatively, we can consider split conformal inference as in \cite{lei2018distribution}, which is computationally more efficient than full conformal inference, particularly when $n$ is large.
For each $i \in [n]$ and $a \in \{0,1\}$, we randomly split $\mathcal{J}(i,a)$ into a training set $\mathcal{J}_{\mathrm{tr}}(i,a)$ and a calibration set $\mathcal{J}_{\mathrm{ca}}(i,a)$.
We use the training set $\mathcal{J}_{\mathrm{tr}}(i) \equiv \mathcal{J}_{\mathrm{tr}}(i,0) \cup \mathcal{J}_{\mathrm{tr}}(i,1)$ to obtain the ordinary least squares estimates $\hat \mu(i,0)=\hat \beta_i$ and $\hat \mu(i,1)=\hat \beta_i+\hat \tau_i$ in the regression model $Y_{ij}=\beta_i+A_{ij}\tau_i+\xi_{ij}$ for $j \in \mathcal{J}_{\mathrm{tr}}(i)$.
We then use the calibration set to compute the conformity scores:
\begin{align*}
	R_j(i,a) \equiv |Y_{ij}-\hat \mu(i,a)|
	\qquad 
	\text{for $j \in \mathcal{J}_{\mathrm{ca}}(i,a)$.}
\end{align*}
Then, the split conformal prediction interval for $Y_{i,n+1}(a)$ conditional on $A_{i,n+1}=a$ can be obtained by
\begin{align*}
	\mathcal{C}_{1-\alpha}^{\mathrm{split}}(i,a)
	\equiv \left[\hat \mu(i,a)-\hat q_{1-\alpha}(i,a), \; \hat \mu(i,a)+\hat q_{1-\alpha}(i,a)\right],
\end{align*}
where $\hat q_{1-\alpha}(i,a)$ is the $\lceil (1-\alpha)(|\mathcal{J}_{\mathrm{ca}}(i,a)|+1) \rceil$-th smallest value in $\{R_j(i,a): j \in \mathcal{J}_{\mathrm{ca}}(i,a)\}$.

Given the prediction intervals for $Y_{i,n+1}(1) \mid \{A_{i,n+1}=1\}$ and $Y_{i,n+1}(0) \mid \{A_{i,n+1}=0\}$, we can construct a Bonferroni-type prediction interval for their difference.
Specifically, let $[L_{1-\alpha/2}(i,a),U_{1-\alpha/2}(i,a)]$ denote the resulting prediction interval at level $1-\alpha/2$ for $Y_{i,n+1}(a) \mid \{A_{i,n+1}=a\}$ for $a \in \{0,1\}$.
Then, a valid unit-$i$ specific prediction interval for the outcome difference at level $1-\alpha$ can be obtained by
\begin{align*}
	\mathcal{C}_{1-\alpha}(i)
	\equiv \left[L_{1-\alpha/2}(i,1)-U_{1-\alpha/2}(i,0), \; U_{1-\alpha/2}(i,1)-L_{1-\alpha/2}(i,0)\right].
\end{align*}
However, this type of construction may be unnecessarily conservative (cf. \citealp{lei2021conformal}).
More importantly, this prediction interval for the conditional outcome difference does not generally admit a causal interpretation due to treatment endogeneity.

\begin{remark}[Permutation or randomization tests] \label{remark:permutation}
	Under additional assumptions, one can also consider permutation inference.
	For example, if we additionally assume independence between $Y(0)$ and $A$, then we can perform a permutation test based on the exchangeability in Assumption \ref{as:treatmentmat} to test the sharp null hypothesis of no treatment effect, $Y_{ij}(0) = Y_{ij}(1)$ for all $i \in [n]$ and $j \neq i$.
	We formalize this testing approach and establish its validity in Appendix \ref{sec:permutation}.
	Alternatively, when the graphon is parametrically specified as a function of observed covariates, one could construct conditional randomization tests in the spirit of \citet{toulis2025estimating}.
	This approach, however, would deviate from the main purpose of our flexible graphon framework.
	We therefore do not pursue this direction here.
\end{remark}

\section{Numerical Simulations}\label{sec:simulation1}

In this section, we conduct Monte Carlo simulations to evaluate the finite-sample performance of our estimation and predictive inference methods.
We consider two sample sizes, $n \in \{ 40, 80 \}$, and perform 1,000 replications for each data generating process.
For brevity, this section contains only the results from the main simulation setup, while supplementary results are deferred to Appendix \ref{sec:simulation2}.

We consider two data generating processes for propensity scores.
The first is a stochastic block model with a covariate.
Letting $X_{1i} \overset{\mathrm{IID}}{\sim} \mathrm{Uniform}[0, 1]$ be a unit-level covariate, the units are classified into three groups according to the following group assignment probability: 
\begin{align*}
	\text{pr}(Z_i = z \mid X_{1i}) = 
	\begin{cases}
		z X_{1i} / 3 & \text{if $z \in \{ 1, 2 \}$,} \\
		1 - X_{1i}   & \text{if $z = 3$,} 
	\end{cases}
\end{align*}
where $Z_i \in \{ 1, 2, 3 \}$ indicates the group to which agent $i$ is assigned.
We then generate the propensity score at each dyad as follows:
\begin{align*}
	P_{ij} = 
	\begin{cases}
		[1 + (1 + z) (X_{1i} + X_{1j})] / 10 & \text{if $Z_i = Z_j = z$ for $z \in \{ 1, 2, 3 \}$,} \\
		(1 + X_{1i} + X_{1j}) / 10 & \text{otherwise.}
	\end{cases} 
\end{align*}

For the second data generating process of the propensity score, we adopt the same network formation model as Design A.1 in \cite{graham2017econometric}:
\begin{align*}
	P_{ij}
	= F( X_{2i} X_{2j} + \beta_i + \beta_j ),
\end{align*}
where $F$ denotes the standard logistic cumulative distribution function, $X_{2i} \in \{ -1, 1 \}$ is a unit-level covariate such that $\text{pr}(X_{2i} = -1) = \text{pr}(X_{2i} = 1) = 0.5$, and
$\beta_i$ represents an individual-level degree heterogeneity defined as $\beta_i = -0.5 + V_i$ with a centered Beta random variable $V_i \overset{\mathrm{IID}}{\sim} ( \mathrm{Beta}(1, 1) - 0.5 )$.

The potential outcomes are generated by
\begin{align*}
	Y_{ij}(0) = \xi_{ij},
	\qquad 
	Y_{ij}(1) = Y_{ij}(0) + \gamma_0 \zeta_{ij} + \gamma_1 P_{ij} + \gamma_2 P_{ij}^2,
\end{align*}
where $\xi_{ij}, \zeta_{ij} \overset{\mathrm{IID}}{\sim} \mathrm{Normal}(0, 1)$ independent of $\{(X_{1i}, X_{2i}, V_i)\}$.
We consider the following three patterns for the values of $\gamma$'s:
(i) $\gamma_0 = \gamma_1 = \gamma_2 = 0$, (ii) $\gamma_0 = \gamma_1 = 1$, $\gamma_2 = 0$, and (iii) $\gamma_0 = \gamma_1 = \gamma_2 = 1$.
Under these three setups, the dyadic ATE parameter becomes (i) $\tau_{ij} = 0$, (ii) $\tau_{ij} = P_{ij}$, and (iii) $\tau_{ij} = P_{ij} + P_{ij}^2$, respectively.
Also notice that $Y_{ij}(0) = Y_{ij}(1)$ in case (i).
Hence, we refer to these as the null setup and the linear and quadratic dyadic ATE setups, respectively.

We specify the implementation details for the estimation and predictive inference methods as follows.
For the kernel weighting function, in view of Assumption \ref{as:kernel}, we consider three kernel functions with $\underline{C}_{\mathcal{K}} = 1$: (i) uniform kernel, (ii) triangular kernel, and (iii) Epanechnikov kernel.
In line with Assumption \ref{as:bw}, the bandwidth $b$ is set to the $h = \sqrt{ (\log n) / n }$-th quantile of $\{ \tilde d(i, i') \}_{i' \ne i}$.
The trimming thresholds for estimating the propensity scores are set to $(\kappa_L, \kappa_U) = (0.01, 0.99)$.
For full conformal inference, we consider $\alpha = 0.1$ as the nominal miscoverage level and a grid of 1,001 equally spaced points over the interval $[-5, 5]$ as the candidate outcome values $y$.

To assess the estimation of the dyadic, individual, and global ATEs, as well as the propensity score, we report the average bias and the average mean squared error:
\begin{align*}
	\text{AvgBias} & = \frac{1}{1000}\sum_{r = 1}^{1000}\left[ \frac{1}{n(n - 1)} \sum_{i \in [n]} \sum_{j \neq i} (\hat \tau_{ij}^{(r)} - \tau_{ij})\right], \\
	\text{AvgMSE} & = \frac{1}{1000}\sum_{r = 1}^{1000}\left[ \frac{1}{n(n - 1)} \sum_{i \in [n]} \sum_{j \neq i} (\hat \tau_{ij}^{(r)} - \tau_{ij})^2 \right],
\end{align*}
where the superscript $(r)$ indicates that it is obtained from the $r$-th replicated dataset.
Analogous definitions apply to the other parameters.
To evaluate the prediction interval $\mathcal{C}_{1-\alpha}(i, a) = [L_{1-\alpha}(i, a), \; U_{1-\alpha}(i, a) ]$ for $Y_{i,n+1}(a) | \{A_{i,n+1} = a\}$, we compute the average empirical coverage probability and the average length:
\begin{align*}
	\text{AvgCoverage}(a) & = \frac{1}{1000} \sum_{r=1}^{1000} \left[ \frac{1}{n} \sum_{i \in [n]} \bm{1}\{ A_{i,n+1}^{(r)} = a \} \bm{1}\{ L_{1-\alpha}^{(r)}(i, a) \le Y_{i,n+1}^{(r)}(a) \le U_{1-\alpha}^{(r)}(i, a) \}  \right], \\
	\text{AvgLength}(a) & = \frac{1}{1000} \sum_{r=1}^{1000} \left[ \frac{1}{n} \sum_{i \in [n]} \left( U_{1-\alpha}^{(r)}(i, a) - L_{1-\alpha}^{(r)}(i, a) \right) \right].
\end{align*}

Table \ref{tab:MC1-DGP2} reports the Monte Carlo simulation results for the neighbourhood kernel smoothing estimator under the linear dyadic ATE setup.
From this, we can observe that the average bias and the average mean squared error for the dyadic ATE and propensity score estimates are satisfactorily small, especially when $n$ is relatively large.
Furthermore, because the individual and global ATEs are obtained by aggregating the dyadic ATEs, the average mean squared errors for estimating these parameters are significantly smaller than that for the dyadic ATE.
Regarding the kernel weighting functions, all three kernels demonstrate similar performance; notably, the choice of kernel has almost no discernible impact on propensity score estimation.
For the dyadic ATE, however, the triangular and Epanechnikov kernels tend to yield slightly smaller biases, while the uniform kernel outperforms the others in terms of average mean squared error.
The same pattern is observed in the other DGPs.
Full simulation results are reported in Table \ref{tab:MC1} in Appendix \ref{sec:simulation2}.

For the proposed conformal inference method, we report the simulation results in Table \ref{tab:conformal} in Appendix \ref{sec:simulation2} to save space.
They show that the average empirical coverage probability is satisfactorily close to the nominal coverage level of 90\%, and that the average interval length is reasonable and tends to decrease as $n$ increases.

\begin{table}[t]
	\caption{Simulation results for neighbourhood kernel smoothing: Linear dyadic ATE setup} \label{tab:MC1-DGP2}
	\begin{center}
		\begin{small}
			\begin{tabular}[t]{rrrrrrrrrrrr}
				\multicolumn{3}{c}{DGP} & \multicolumn{2}{c}{DATE} & \multicolumn{2}{c}{IATE} & \multicolumn{2}{c}{GATE} & \multicolumn{2}{c}{PS} \\
				$P$ & $n$ & $\mathcal{K}$ & AvgBias & AvgMSE & AvgBias & AvgMSE & AvgBias & AvgMSE & AvgBias & AvgMSE \\
				1 & 40 & 1 & 3.65 & 63.68 & 3.65 & 2.30 & 3.65 & 0.86 & -1.73 & 1.76\\
				&  & 2 & 3.60 & 65.34 & 3.60 & 2.32 & 3.60 & 0.86 & -1.74 & 1.80\\
				&  & 3 & 3.58 & 66.15 & 3.58 & 2.33 & 3.58 & 0.86 & -1.74 & 1.81\\
				& 80 & 1 & 2.39 & 40.90 & 2.39 & 0.85 & 2.39 & 0.23 & -1.23 & 1.16\\
				&  & 2 & 2.32 & 41.77 & 2.32 & 0.85 & 2.32 & 0.23 & -1.22 & 1.16\\
				&  & 3 & 2.30 & 42.27 & 2.30 & 0.85 & 2.30 & 0.23 & -1.22 & 1.16\\
				2 & 40 & 1 & 4.72 & 82.47 & 4.72 & 3.13 & 4.72 & 0.94 & -1.75 & 1.46\\
				&  & 2 & 4.33 & 84.12 & 4.33 & 3.10 & 4.33 & 0.91 & -1.76 & 1.47\\
				&  & 3 & 4.26 & 84.85 & 4.26 & 3.10 & 4.26 & 0.91 & -1.76 & 1.47\\
				& 80 & 1 & 1.17 & 60.59 & 1.17 & 1.21 & 1.17 & 0.22 & -0.94 & 0.68\\
				&  & 2 & 1.04 & 61.87 & 1.04 & 1.19 & 1.04 & 0.22 & -0.95 & 0.71\\
				&  & 3 & 1.02 & 62.45 & 1.02 & 1.19 & 1.02 & 0.22 & -0.95 & 0.72\\
			\end{tabular}
		\end{small}
	\end{center}
	\begin{small}
		\noindent Note: All entries are multiplied by $100$.
		
		Abbreviations: DATE (dyadic ATE), IATE (individual ATE), GATE (global ATE), PS (propensity score).
		
		$\text{DGP }(P) \in \{\text{1 (stochastic block model)}, \text{2 (degree heterogeneity model)}\}$.
		
		$\mathcal{K} \in \{\text{1 (uniform)}, \text{2 (triangular)}, \text{3 (Epanechnikov)}\}$, with $\underline{C}_{\mathcal{K}}=1$.
	\end{small}
\end{table}

\section{Empirical Analysis} \label{sec:empiric}

As an empirical illustration of our method, we investigate the effects of FTAs on bilateral trade flows. 
This has been a classical and one of the most important empirical questions in international economics (e.g., \citealp{tinbergen1962shaping}).
In the literature, there seems to be an agreement that FTAs indeed help increase exports and imports between member countries; however, it is still somewhat unclear to what extent they are actually beneficial; see, for example, the discussion in \cite{nagengast2025staggered}.
In addition, as mentioned above, since most papers on this topic use a regression-based gravity model approach, which is not necessarily based on a causal inference framework, our study would contribute to the literature in this respect.

In this empirical analysis, we focus on the trade flows across 37 selected countries and regions mainly in Asia, Oceania, and North America.
The outcomes of interest are bilateral export and import amounts.
Considering huge variation and unbalance of these values across countries, we transform them into percents of total exports and imports with all countries.
The data of exports and imports for the year 2021 are obtained from the World Integrated Trade Solution, World Bank (\url{https://wits.worldbank.org}).

For the treatment variable, we consider all FTAs related to our sample countries that were enacted as of 2021.
The data for the FTAs are obtained from the website of Japan External Trade Organization (\url{https://www.jetro.go.jp/theme/wto-fta/ftalist.html}).
We do not differentiate between the contents of FTAs.
If a pair of countries jointly participates in at least one FTA, we consider that dyad as treated.

Figure \ref{fig:fta} shows the FTA network of our data.
Among the total of 666 dyads, the number of treated dyads is 197.
In our dataset, Singapore has the largest number of FTA partners (degree 27), and the country with the smallest is Mongolia (degree 1).
Based on this dataset, we estimate the treatment effects of FTAs using the same estimation procedure as in the simulation experiment in Section \ref{sec:simulation1}.
For the choice of kernel function, we use the uniform kernel.

\begin{figure}[t]
	\begin{center}
		\includegraphics[width = 15cm, bb = 0 0 814 664]{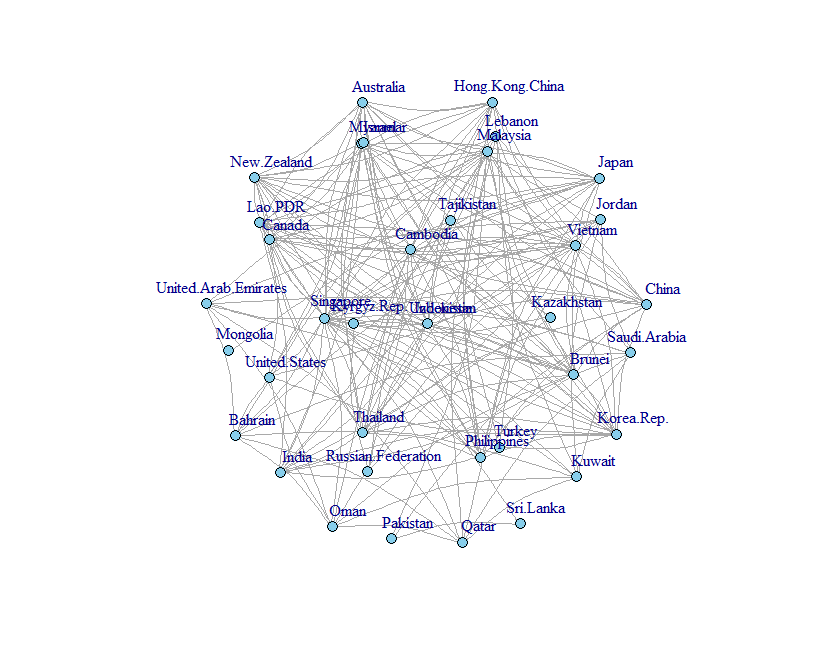}
		\caption{FTA network}
		\label{fig:fta}
	\end{center}
\end{figure}

Before presenting our empirical results, we recall two key assumptions underlying our analysis: the exchangeability of the FTA network and SUTVA.
As discussed in Sections \ref{sec:setup} and \ref{sec:method}, our framework can flexibly accommodate country-level heterogeneity, including geographic location, economic size, trade openness, and other unobserved factors.
Moreover, it allows the FTA status of dyad $(i,j)$ and the trade flow of another dyad $(i,k)$ to co-vary through the shared latent attribute $U_i$.
Thus, the framework does not entirely rule out cross-dyad dependence, providing a reasonable and tractable approximation for this application.
At the same time, however, these assumptions are not innocuous.
The FTA network may depend on dyad-specific factors, such as diplomatic relations outside of trade agreements.
In addition, an FTA between two countries may directly affect trade flows with third countries, for example, through trade diversion, which could violate the no-interference condition.
Consequently, the empirical results presented below should be interpreted with these limitations in mind.

The estimated individual ATEs on exports and imports for all countries are depicted in Figure \ref{fig:iate}.
We can observe that for both exports and imports, the impacts of FTAs are positive for almost all countries.
It appears that there is a positive correlation between the impacts on exports and those on imports, suggesting that FTAs are mutually beneficial for both ``origin'' and ``destination'' countries.
For both exports and imports, the presence of an FTA increases the trade volume between countries by up to approximately 4 percent points.
Averaged over the countries, the estimated global ATE on exports is 1.938 and that on imports is 2.161, suggesting that the impact of establishing an FTA, on average, increases trade inflow and outflow by about 2 percent of the total.
Meanwhile, we could not observe the impact of FTA only in Singapore.
As mentioned above, Singapore has FTAs with almost three-quarters of the countries in the dataset, which might have made it more difficult to identify the impact of FTAs compared with other countries.

To investigate the nature of treatment heterogeneity, we draw scatter plots of individual ATEs against individual average propensity scores in Figure \ref{fig:scatplot}.
Interestingly, the scatter plots reveal a certain negative correlation between the treatment effect and propensity score.
This observation might indicate the possibility of decreasing marginal returns to the number of FTAs. 

Lastly, Figure \ref{fig:ci} reports the country-wise full conformal prediction intervals for export and import shares conditional on the presence or absence of an FTA.
Since the effective sample size for each country is limited, the lower bound of the prediction intervals are  zero in most cases.
Nevertheless, we can observe a clear pattern that the upper bounds under $A = 1$ tend to be substantially larger than those under $A = 0$ for both exports and imports.
This finding is consistent with the positive estimated individual ATEs in Figure \ref{fig:iate}.
In particular, Mongolia, which exhibits a relatively large individual ATE, also has a large upper bound of the prediction interval under $A = 1$.
However, this comparison does not by itself have a causal interpretation, because the difference between the two prediction intervals reflects self-selection into FTAs as well as the treatment effect.
Thus, these prediction results should not be overstated.

\begin{figure}[t]
	\begin{minipage}[b]{\textwidth}
		\centering
		\includegraphics[width=0.8\textwidth, bb = 0 0 659 246]{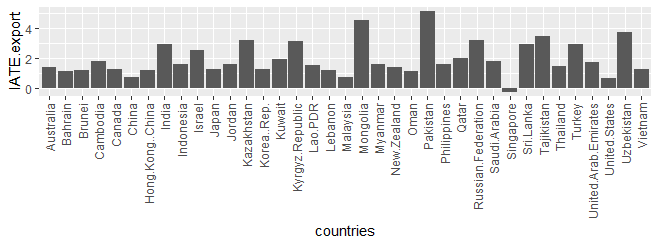}
		\subcaption{Export share (\%)} 
	\end{minipage}
	\begin{minipage}[b]{\textwidth}
		\centering
		\includegraphics[width=0.8\textwidth, bb = 0 0 659 246]{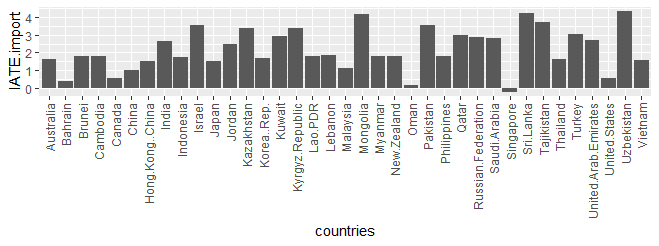}
		\subcaption{Import share (\%)}
	\end{minipage}
	\caption{The estimated individual ATEs} \label{fig:iate}
\end{figure}

\begin{figure}[t]
	\begin{minipage}[b]{0.5\textwidth}
		\centering
		\includegraphics[width=0.8\textwidth, bb = 0 0 418 280]{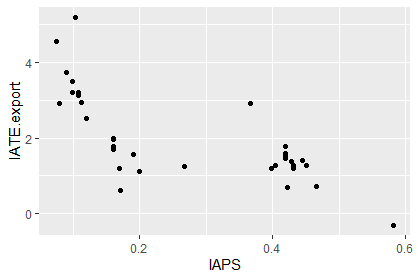}
		\subcaption{Export share (\%)}
	\end{minipage}
	\begin{minipage}[b]{0.5\textwidth}
		\centering
		\includegraphics[width=0.8\textwidth, bb = 0 0 418 280]{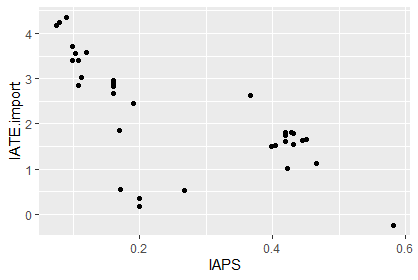}
		\subcaption{Import share (\%)}
	\end{minipage}
	\caption{Individual ATE plotted against individual average propensity score} \label{fig:scatplot}
\end{figure}

\begin{figure}[t]
	\begin{minipage}[b]{\textwidth}
		\centering
		\includegraphics[width=0.8\textwidth, bb = 0 0 710 253]{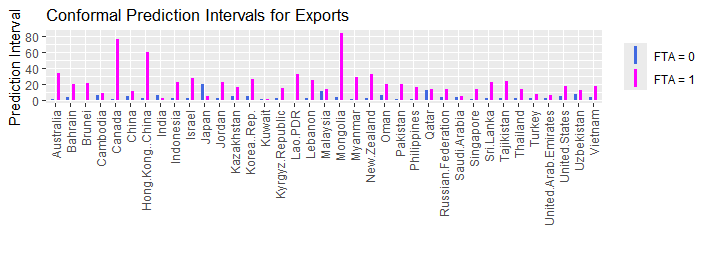}
		\subcaption{Export share (\%)} 
	\end{minipage}
	\begin{minipage}[b]{\textwidth}
		\centering
		\includegraphics[width=0.8\textwidth, bb = 0 0 710 253]{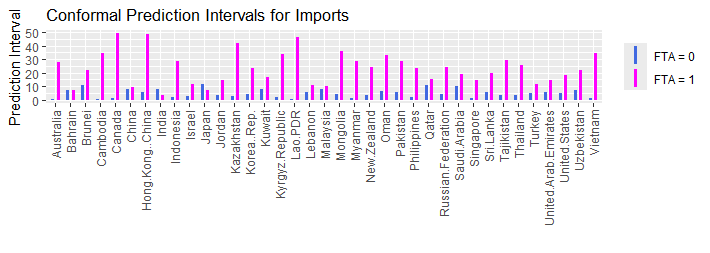}
		\subcaption{Import share (\%)}
	\end{minipage}
	\caption{Conformal prediction intervals ($\alpha = 0.1$)} \label{fig:ci}
\end{figure}

\section*{Declaration of the use of generative AI and AI-assisted technologies}

During the preparation of this work the authors used ChatGPT in order to improve readability and language.
After using this tool the authors reviewed and edited the content as necessary and take full responsibility for the content of the publication.

\section*{Acknowledgements}

The authors thank the Editor (Paul Fearnhead), the Associate Editor, and two anonymous referees for their insightful comments and suggestions.
In particular, the authors appreciate the comments provided by one of the referees regarding Assumption \ref{as:PO} and Lemma \ref{lem:J12}.
This work was supported by JSPS KAKENHI Grant Numbers 20K01597 and 24K04817.

\clearpage
\bibliographystyle{tandfx}
\bibliography{paper-ref}	
\clearpage

\appendix

\section{Proofs} \label{sec:proof}

\subsection{Proof of Proposition \ref{prop:DATE}}

By the law of iterated expectations, we have
\begin{align*}
	\text{pr}[ A_{ij} = 1 \mid Y_{ij}(0), Y_{ij}(1), P_{ij} ]
	& = \bbE[ A_{ij} \mid Y_{ij}(0), Y_{ij}(1), P_{ij} ] \\
	& = \bbE \Big[ \bbE[ A_{ij} \mid Y_{ij}(0), Y_{ij}(1), U_i, U_j ] \; \Big| \; Y_{ij}(0), Y_{ij}(1), P_{ij} \Big] \\
	& = \bbE \Big[ \bbE[ A_{ij} \mid U_i, U_j ] \; \Big| \; Y_{ij}(0), Y_{ij}(1), P_{ij} \Big] \\
	& = P_{ij},
\end{align*}
implying the conditional independence between $(Y_{ij}(0), Y_{ij}(1))$ and $A_{ij}$ given $P_{ij}$.
Given this, it holds that
\begin{align*}
    m_{ij}(a)
    = \bbE[ Y_{ij} \mid A_{ij} = a, P_{ij}]
    = \bbE[ Y_{ij}(a) \mid A_{ij} = a, P_{ij}]
    = \bbE[ Y_{ij}(a) \mid P_{ij}]
    \quad \text{for $a \in \{0, 1\}$.}
\end{align*}
This completes the proof.
\qed 

\subsection{Proof of Theorem \ref{thm:convergence}}

For $a \in \{ 0, 1 \}$, let
\begin{align*}
    \mu_{ij}(a) 
    \equiv \bbE[ \bm{1}\{ A_{ij} = a \} Y_{ij} \mid P_{ij} ],
    \qquad 
    \hat \mu_{ij}(a) 
    \equiv \sum_{i' \in \mathcal{N}_i} w(i, i') \bm{1}\{ A_{i'j} = a \} Y_{i'j},
\end{align*}
where $\mathcal{N}_i = \{ i' \in [n]: i' \neq i, \tilde{d}(i, i') \le b \}$ and $w(i, i')$ is defined as in \eqref{eq:Ptilde}.
By construction, $m_{ij}(1) = \mu_{ij}(1) / P_{ij}$ and $m_{ij}(0) = \mu_{ij}(0) / (1 - P_{ij})$. 
Similarly, the following equalities hold by the definition of $\mathcal{N}_i$ and Assumptions \ref{as:kernel}(i): $\hat m_{ij}(1) = \hat \mu_{ij}(1) / \hat P_{ij}$ and $\hat m_{ij}(0) = \hat \mu_{ij}(0) / (1 - \hat P_{ij})$.

By the $c_r$ inequality,
\begin{align*}
	\frac{1}{n-1}\| \hat \tau_{i \cdot} - \tau_{i \cdot} \|_2^2
	& = \frac{1}{n-1} \left\| \big( \hat m_{i \cdot}(1) - m_{i \cdot}(1) \big) - \big( \hat m_{i \cdot}(0) - m_{i \cdot}(0) \big) \right\|_2^2 \\
	& \le \frac{2}{n-1} \sum_{a \in \{0,1\}} \| \hat m_{i \cdot}(a) - m_{i \cdot}(a) \|_2^2.
\end{align*}
Then, it suffices to show that uniformly in $i \in [n]$,
\begin{align*}
	\frac{1}{n-1} \| \hat m_{i \cdot}(a) - m_{i \cdot}(a) \|_2^2 
	= \frac{1}{n - 1} \sum_{j \neq i} \big( \hat m_{ij}(a) - m_{ij}(a) \big)^2
	\lesssim_{\bbP} \sqrt{ (\log n) / n }.
\end{align*}
We focus on the term for $a = 1$ and the proof for $a = 0$ is analogous.
Noting that
\begin{align*}
	\hat m_{ij}(1) - m_{ij}(1)
	= \frac{ \hat \mu_{ij}(1) }{ \hat P_{ij} } - \frac{ \mu_{ij}(1) }{ P_{ij }}
	= \frac{1}{ \hat P_{ij} } \big( \hat \mu_{ij}(1) - \mu_{ij}(1) \big) - \frac{ \mu_{ij}(1) }{ \hat P_{ij} P_{ij} } \big( \hat P_{ij} - P_{ij} \big),
\end{align*}
the $c_r$ inequality yields
\begin{align}
	& \frac{1}{n-1} \| \hat m_{i \cdot}(1) - m_{i \cdot}(1) \|_2^2 \nonumber \\
	& = \frac{1}{n - 1} \sum_{j \neq i} \left( \frac{1}{ \hat P_{ij} } \big( \hat \mu_{ij}(1) - \mu_{ij}(1) \big) - \frac{ \mu_{ij}(1) }{ \hat P_{ij} P_{ij} } \big( \hat P_{ij} - P_{ij} \big) \right)^2 \nonumber \\
	& \le \frac{2}{n - 1} \sum_{j \neq i} \left( \frac{1}{ \hat P_{ij} } \big( \hat \mu_{ij}(1) - \mu_{ij}(1) \big) \right)^2 + \frac{2}{n - 1} \sum_{j \neq i} \left(\frac{ \mu_{ij}(1) }{ \hat P_{ij} P_{ij} } \big( \hat P_{ij} - P_{ij} \big) \right)^2 \label{eq:m1decomp}.
\end{align}
For the first term of \eqref{eq:m1decomp}, uniformly in $i \in [n]$,
\begin{align*}
	\frac{1}{n - 1} \sum_{j \neq i} \left( \frac{1}{ \hat P_{ij} } \big( \hat \mu_{ij}(1) - \mu_{ij}(1) \big) \right)^2
	& \le \left( \max_{j \neq i} \frac{1}{ \hat P_{ij}^2 } \right) \left( \frac{1}{n - 1} \sum_{j \neq i} \big( \hat \mu_{ij}(1) - \mu_{ij}(1) \big)^2 \right) \\
	& \le \kappa_L^{-2} \left( \frac{1}{n - 1} \sum_{j \neq i} \big( \hat \mu_{ij}(1) - \mu_{ij}(1) \big)^2 \right) \\
	& \lesssim_{\bbP} \sqrt{ (\log n) / n },
\end{align*}
where the last line follows from Lemma \ref{lem:convergence}.
Similarly, for the second term of \eqref{eq:m1decomp}, uniformly in $i \in [n]$,
\begin{align*}
	\frac{1}{n - 1} \sum_{j \neq i} \left(\frac{ \mu_{ij}(1) }{ \hat P_{ij} P_{ij} } \big( \hat P_{ij} - P_{ij} \big) \right)^2
	& \le \left( \max_{j \neq i} \frac{ \mu_{ij}^2(1) }{ \hat P_{ij}^2 P_{ij}^2 } \right) \left( \frac{1}{n - 1} \sum_{j \neq i} \big( \hat P_{ij} - P_{ij} \big)^2 \right) \\
	& \le \frac{C_Y^2}{\kappa_L^2 \underline{C}_P^2} \left( \frac{1}{n - 1} \sum_{j \neq i} \big( \hat P_{ij} - P_{ij} \big)^2 \right) \\
	& \lesssim_{\bbP} \sqrt{ (\log n) / n },
\end{align*}
where the second line follows from Assumptions \ref{as:overlap} and \ref{as:PO}(ii), and the last line from Lemma \ref{lem:convergence}.
This completes the proof.
\qed

\begin{lemma} \label{lem:convergence}
	Under Assumptions \ref{as:treatmentmat} and \ref{as:kernel} -- \ref{as:Lipschitzpropensity score}, we have
	\begin{equation} \label{eq:ratepropensity-score1}
		\left[ d_{2,\infty} \big( \hat P, P \big) \right]^2 
		\lesssim_{\bbP} \sqrt{ (\log n) / n },
	\end{equation}
	where $\hat P = (\hat P_{ij})$ and $P = (P_{ij})$.
	If Assumptions \ref{as:independence1}, \ref{as:overlap}, \ref{as:PO}, and \ref{as:independence2} hold additionally, then
	\begin{equation} \label{eq:ratePO}
		\left[ d_{2,\infty} \big( \hat \mu(a), \mu(a) \big) \right]^2
		\lesssim_{\bbP} \sqrt{(\log n) / n}
        \qquad \text{for $a \in \{ 0, 1 \}$},
	\end{equation}
	where $\hat \mu(a) = (\hat \mu_{ij}(a))$ and $\mu(a) = (\mu_{ij}(a))$.
\end{lemma}

\begin{proof}
    To prove \eqref{eq:ratepropensity-score1}, the $c_r$ inequality yields
    \begin{align*}
        \left[ d_{2,\infty} \big( \hat P, P \big) \right]^2
        & = \max_{1 \le i \le n} \left[ \frac{1}{n - 1} \sum_{j \neq i} (\hat P_{ij} - P_{ij})^2 \right] \\
        & = \max_{1 \le i \le n} \left[ \frac{1}{n - 1} \sum_{j \neq i} \left( \frac{ \check P_{ij} + \check P_{ji} }{ 2 } - \frac{ P_{ij} + P_{ji} }{ 2 } \right)^2 \right] \\
        & \le \frac{1}{2} \max_{1 \le i \le n} \left[ \frac{1}{n - 1} \sum_{j \neq i} (\check P_{ij} - P_{ij})^2 + \frac{1}{n - 1} \sum_{j \neq i} (\check P_{ji} - P_{ji})^2 \right] \\
        & \le \frac{1}{2} \left[ d_{2,\infty} \big( \check P, P \big) \right]^2 + \frac{1}{2} \left[ d_{2,\infty} \big( \check P^\top, P^\top \big) \right]^2.
    \end{align*}
    Thus, \eqref{eq:ratepropensity-score1} follows once we show that
    \begin{align}
        \left[ d_{2,\infty} \big( \check P, P \big) \right]^2 
        & \lesssim_{\bbP} \sqrt{ (\log n) / n }, \label{eq:ratepropensity-score2} \\ 
        \left[ d_{2,\infty} \big( \check P^\top, P^\top \big) \right]^2 
        & \lesssim_{\bbP} \sqrt{ (\log n) / n }. \label{eq:ratepropensity-score3}
    \end{align}
    
    For the left-hand side of \eqref{eq:ratepropensity-score2}, the triangle inequality leads to 
    \begin{align*}
    	d_{2, \infty}(\check P, P) 
    	\le d_{2, \infty}(\check P, \tilde P) + d_{2, \infty}(\tilde P, P).
    \end{align*}
    For the first term of the right-hand side, since $	\check P_{ij} - \tilde P_{ij} = \bm{1}\{\tilde P_{ij} < \kappa_L\}(\kappa_L - \tilde P_{ij}) + \bm{1}\{\tilde P_{ij} > \kappa_U\}(\kappa_U - \tilde P_{ij})$, we have
    \begin{align*}
    	\frac{1}{n - 1} \sum_{j \neq i} (\check P_{ij} - \tilde P_{ij})^2
    	& = \frac{1}{n - 1} \sum_{j \neq i} \bm{1}\{\tilde P_{ij} < \kappa_L\}(\kappa_L - \tilde P_{ij})^2 + \frac{1}{n - 1} \sum_{j \neq i} \bm{1}\{\tilde P_{ij} > \kappa_U\}(\kappa_U - \tilde P_{ij})^2 \\
    	& \le \frac{1}{n - 1} \sum_{j \neq i} \bm{1}\{\tilde P_{ij} < \kappa_L\} + \frac{1}{n - 1} \sum_{j \neq i} \bm{1}\{\tilde P_{ij} > \kappa_U \}.
    \end{align*}
    To rewrite the last line, let $p = (p_{ij})$, where
    \begin{align*}
    	p_{ij} \equiv \sum_{i' \in \mathcal{N}_i}w(i, i') P_{i'j}. 
    \end{align*}
    Then, we have
    \begin{align*}
    	\bm{1}\{ \tilde P_{ij} < \kappa_L \}
    	& = \bm{1}\{ \tilde P_{ij} - p_{ij} < \kappa_L - p_{ij} \} \\
    	& \le \bm{1}\{ \tilde P_{ij} - p_{ij} < \kappa_L - \underline{C}_P \} \\
    	& = \bm{1}\{ p_{ij} - \tilde P_{ij} > \underline{C}_P - \kappa_L \} \\
    	& \le \frac{(\tilde P_{ij} - p_{ij})^2}{(\underline{C}_P - \kappa_L)^2},
    \end{align*}
    where the second line follows from the fact that Assumption \ref{as:overlap} implies $p_{ij} \in [ \underline{C}_P, \overline{C}_P ]$ and the last line follows from $\underline{C}_P - \kappa_L > 0$ by Assumption \ref{as:trim}.
    Similarly, we have
    \begin{align*}
    	\bm{1}\{ \tilde P_{ij} > \kappa_U \}
    	= \bm{1}\{ \tilde P_{ij} - p_{ij} > \kappa_U - p_{ij} \}
    	\le \bm{1}\{ \tilde P_{ij} - p_{ij} > \kappa_U - \overline{C}_P \}
    	\le \frac{ (\tilde P_{ij} - p_{ij})^2 }{ (\kappa_U - \overline{C}_P)^2 }.
    \end{align*}
    These results imply that 
    \begin{align*}
    	[d_{2, \infty}(\check P, \tilde P)]^2 
    	\lesssim [d_{2, \infty}(\tilde P, p)]^2.
    \end{align*}
    Thus, \eqref{eq:ratepropensity-score2} follows once we show that
    \begin{align}
    	\left[ d_{2,\infty} \big( \tilde P, P \big) \right]^2 
    	& \lesssim_{\bbP} \sqrt{ (\log n) / n }, \label{eq:ratepropensity-score4} \\ 
    	\left[ d_{2,\infty} \big( \tilde P, p \big) \right]^2 
    	& \lesssim_{\bbP} \sqrt{ (\log n) / n }. \label{eq:ratepropensity-score5} 
    \end{align}
    
	Based on the discussion in the last two paragraphs, the statement of this lemma follows if we show \eqref{eq:ratePO}, \eqref{eq:ratepropensity-score3}, \eqref{eq:ratepropensity-score4}, and \eqref{eq:ratepropensity-score5}.
	The rest of the proof establishes these results by building on the proof of Theorem 1 in \cite{zhang2017estimating}.
	For clarity of exposition, however, we focus on proving \eqref{eq:ratePO} for $a = 1$, \eqref{eq:ratepropensity-score4}, and \eqref{eq:ratepropensity-score5}, since the proofs for $a = 0$ and \eqref{eq:ratepropensity-score3} are analogous and can be safely omitted.
	
	To proceed, we introduce several useful preliminaries.
	First, letting $\theta_{ij}$ denote either $P_{ij}$ or $\mu_{ij}(1)$, the corresponding estimator $\tilde P_{ij}$ or $\hat \mu_{ij}(1)$ can be written as 
	\begin{align*}
		\hat \theta_{ij}
		\equiv \sum_{i' \in \mathcal{N}_i} w(i, i') A_{i'j} Z_{i'j}
		\qquad \text{with} \quad  
		w(i, i') 
		= \left. \mathcal{K}\left( \frac{\tilde d(i, i')}{b} \right) \middle/ \sum_{l \in \mathcal{N}_i} \mathcal{K}\left( \frac{\tilde d(i, l)}{b} \right) \right. ,
	\end{align*}
    where
    \begin{align*}
        Z_{i'j} \equiv 
        \begin{cases}
            1 & \text{if $\theta_{ij} = P_{ij}$,} \\
	 	Y_{i'j} & \text{if $\theta_{ij} = \mu_{ij}(1)$.}
	 \end{cases}
	\end{align*}
	Next, it is easy to see from the definition of $\mathcal{N}_i$ and the choice of $b$ in Assumption \ref{as:bw} that 
	\begin{align} \label{eq:numneighbors}
        |\mathcal{N}_i| \ge C_0 \sqrt{n \log n},
    \end{align}
	where $C_0$ is as in Assumption \ref{as:bw}.
	With this, the denominator of $w(i, i')$ is bounded below from $C_1 \sqrt{n \log n}$ for some constant $C_1 > 0$:
    \begin{align} \label{eq:denominator}
        \mathcal{D}_i 
        \equiv \sum_{l \in \mathcal{N}_i} \mathcal{K} \left( \frac{\tilde d(i, l)}{b} \right)
        \ge \underline{C}_{\mathcal{K}} |\mathcal{N}_i|
        \ge C_1 \sqrt{n \log n}, 
    \end{align}
	where the first inequality follows from Assumption \ref{as:kernel}(ii).
	Moreover, since $0 \le w(i, i') \le 1$ for all $i' \neq i$, Assumption \ref{as:kernel}(ii) and \eqref{eq:denominator} lead to
    \begin{align} \label{eq:wbound}
        w(i, i')
        = \mathcal{D}_i^{-1} \mathcal{K}\left( \frac{ \tilde d(i, i') }{ b } \right) 
        \lesssim \frac{1}{\sqrt{n \log n}}
    \end{align}
    uniformly in $i \in [n]$ and $i' \neq i$.
	
	We now proceed to prove \eqref{eq:ratePO} for $a = 1$ and \eqref{eq:ratepropensity-score4}, which requires showing that
	\begin{align*}
		\frac{1}{n - 1} \sum_{j \neq i} (\hat \theta_{ij} - \theta_{ij})^2
		\lesssim_{\bbP} \sqrt{(\log n) / n}.
	\end{align*}
	uniformly in $i \in [n]$.
	In what follows, all analyses hold uniformly in $i \in [n]$, and we hereafter omit this phrase for the sake of brevity.
	Using $\sum_{i' \in \mathcal{N}_i} w(i, i') = 1$ and the $c_r$ inequality, observe that
	\begin{align*}
        \frac{1}{n - 1} \sum_{j \neq i} ( \hat \theta_{ij} - \theta_{ij} )^2 
        & = \frac{1}{n - 1} \sum_{j \neq i} \left[ \sum_{i' \in \mathcal{N}_i} w(i, i') ( A_{i'j} Z_{i'j} - \theta_{ij} ) \right]^2 \\
        & = \frac{1}{n - 1} \sum_{j \neq i} \left[ \sum_{i' \in \mathcal{N}_i} w(i, i') ( A_{i'j} Z_{i'j} - \theta_{i'j} ) + \sum_{i' \in \mathcal{N}_i} w(i, i') ( \theta_{i'j} - \theta_{ij} ) \right]^2 \\
        & \le \frac{2}{n - 1} \sum_{j \neq i} \left[ \sum_{i' \in \mathcal{N}_i} w(i, i') ( A_{i'j} Z_{i'j} - \theta_{i'j} ) \right]^2 + \frac{2}{n - 1} \sum_{j \neq i} \left[ \sum_{i' \in \mathcal{N}_i} w(i, i') ( \theta_{i'j} - \theta_{ij} ) \right]^2 \\
        & \eqqcolon 2 J_1(i) + 2 J_2(i).
    \end{align*}
	Below, we show that $J_1(i) \lesssim_{\bbP} \sqrt{ (\log n) / n}$ and $J_2(i) \lesssim_{\bbP} \sqrt{ (\log n) / n}$, which lead to \eqref{eq:ratePO} for $a = 1$ and \eqref{eq:ratepropensity-score4}.
    In addition, note that, when $\theta_{ij} = P_{ij}$, the result for $J_1(i)$ alone implies \eqref{eq:ratepropensity-score5}. 
	 
    To bound $J_1(i)$, observe that
	\begin{align}
        J_1(i)
        & = \frac{1}{n-1} \sum_{j \neq i} \left[ \sum_{i' \in \mathcal{N}_i} w(i, i') ( A_{i'j} Z_{i'j} - \theta_{i'j} ) \right]^2 \nonumber \\
        & = \frac{1}{n - 1} \sum_{j \neq i} \sum_{i' \in \mathcal{N}_i} \sum_{i'' \in \mathcal{N}_i} w(i, i') w(i, i'') ( A_{i'j} Z_{i'j} - \theta_{i'j} ) ( A_{i''j} Z_{i''j} - \theta_{i''j} ) \nonumber \\
        & = \frac{1}{n - 1} \sum_{j \neq i} \sum_{i' \in \mathcal{N}_i} w^2(i, i') ( A_{i'j} Z_{i'j} - \theta_{i'j} )^2  \\
        & \quad + \frac{1}{n - 1} \sum_{j \neq i} \sum_{i' \in \mathcal{N}_i} \sum_{i'' \in \mathcal{N}_i, i'' \neq i'} w(i, i') w(i, i'') ( A_{i'j} Z_{i'j} - \theta_{i'j} ) ( A_{i''j} Z_{i''j} - \theta_{i''j} ) \\
		& \eqqcolon J_{11}(i) + J_{12}(i).
    \end{align}
	To bound $J_{11}(i)$, note that 
	\begin{align} \label{eq:boundAZ}
        | A_{ij} Z_{ij} - \theta_{ij}| = 
        \begin{cases}
            |A_{ij} - P_{ij}| \le 1 & \text{if $\theta_{ij} = P_{ij}$,} \\
	 	|A_{ij} Y_{ij} - \mu_{ij}(1)| \le 2 C_Y & \text{if $\theta_{ij} = \mu_{ij}(1)$,}
        \end{cases}
    \end{align}
	by Assumption \ref{as:PO}(ii).
	Then, letting $C_2 = 1 \vee 2C_Y$, we have
	\begin{align*}
        J_{11}(i)
        & \le \frac{C_2^2}{n - 1} \sum_{j \neq i} \sum_{i' \in \mathcal{N}_i} w^2(i, i') \\
        & \le \frac{C_2^2 \overline{C}_{\mathcal{K}}}{(n - 1) \mathcal{D}_i} \sum_{j \neq i} \sum_{i' \in \mathcal{N}_i} w(i, i') \\
        & \lesssim \frac{1}{\sqrt{ n \log n }},
	\end{align*}
	where the second inequality follows from Assumption \ref{as:kernel}(ii) and the last line follows from \eqref{eq:denominator} and $\sum_{i' \in \mathcal{N}_i} w(i, i') = 1$.
	To bound $J_{12}(i)$, Lemma \ref{lem:J12} shows that
	\begin{align} \label{eq:J12}
		J_{12}(i)
		\lesssim_{\bbP} \sqrt{(\log n) / n}.
	\end{align}
	Combining these results, we obtain
	\begin{align} \label{eq:J1}
		J_1(i)
		\lesssim_{\bbP} 1/\sqrt{n \log n} + \sqrt{(\log n) / n}.
	\end{align}
	
	To bound $J_2(i)$, observe that
	\begin{align*}
        J_2(i)
        & = \frac{1}{n - 1} \sum_{j \neq i} \left[ \sum_{i' \in \mathcal{N}_i} w(i, i') ( \theta_{i'j} - \theta_{ij} ) \right]^2 \\
        & \le \frac{1}{n - 1} \sum_{j \neq i} \sum_{i' \in \mathcal{N}_i} w(i, i') ( \theta_{i'j} - \theta_{ij} )^2 \\
        & = \sum_{i' \in \mathcal{N}_i} w(i, i')  \left[ \frac{1}{n - 1} \sum_{j \neq i} ( \theta_{i'j} - \theta_{ij} )^2 \right],
    \end{align*}
	by Jensen's inequality.
	When $\theta_{ij} = P_{ij}$, Lemma 2 of \cite{zhang2017estimating} shows that
	\begin{align*}
        \frac{1}{n - 1} \sum_{j \neq i} ( \theta_{i'j} - \theta_{ij} )^2
        & = \frac{1}{n - 1} \sum_{j \neq i} ( P_{i'j} - P_{ij} )^2 \\
        & \lesssim_{\bbP} \sqrt{(\log n) / n}
    \end{align*}
	uniformly in $i \in [n]$ and $i' \in \mathcal{N}_i$.
	When $\theta_{ij} = \mu_{ij}(1)$, uniformly in $i \in [n]$ and $i' \in \mathcal{N}_i$,
	\begin{align*}
        \frac{1}{n - 1} \sum_{j \neq i} ( \theta_{i'j} - \theta_{ij} )^2
        & = \frac{1}{n - 1} \sum_{j \neq i} \Big( \mu_{i'j}(1) - \mu_{ij}(1) \Big)^2 \\
        & = \frac{1}{n - 1} \sum_{j \neq i} \Big( \bbE[ A_{i'j} Y_{i'j}(1) \mid P_{i'j} ] - \bbE[ A_{ij} Y_{ij}(1) \mid P_{ij} ] \Big)^2 \\
        & = \frac{1}{n - 1} \sum_{j \neq i} \Big( P_{i'j} \bbE[ Y_{i'j}(1) \mid P_{i'j} ] - P_{ij} \bbE[ Y_{ij}(1) \mid P_{ij} ] \Big)^2 \\
        & = \frac{1}{n - 1} \sum_{j \neq i} \bigg( \Big( P_{i'j} - P_{ij} \Big) \bbE[ Y_{ij}(1) \mid P_{ij} ] + P_{i'j} \Big( \bbE[ Y_{i'j}(1) \mid P_{i'j} ] - \bbE[ Y_{ij}(1) \mid P_{ij} ] \Big) \bigg)^2 \\
        & \le \frac{2 C_Y^2}{n - 1} \sum_{j \neq i} \Big( P_{i'j} - P_{ij} \Big)^2 + \frac{2 \overline{C}_P^2}{n - 1} \sum_{j \neq i} \Big( \bbE[ Y_{i'j}(1) \mid P_{i'j} ] - \bbE[ Y_{ij}(1) \mid P_{ij} ] \Big)^2 \\
        & \le \frac{2 C_Y^2}{n - 1} \sum_{j \neq i} \Big( P_{i'j} - P_{ij} \Big)^2 + \frac{2 \overline{C}_P^2 L_Y^2}{n - 1} \sum_{j \neq i} \Big( P_{i'j} - P_{ij} \Big)^2 \\
        & \lesssim_{\bbP} \sqrt{(\log n) / n},
    \end{align*}
	where the third equality follows from Assumption \ref{as:independence1}, the first inequality follows from the $c_r$ inequality and Assumptions \ref{as:overlap} and \ref{as:PO}(ii), the second inequality follows from Assumption \ref{as:PO}(iii), and the last line follows from Lemma 2 of \cite{zhang2017estimating}.
    Thus, we obtain
    \begin{align}\label{eq:J2}
        J_2(i)
        \lesssim_{\bbP} \sqrt{(\log n) / n}.
    \end{align}
	Combining \eqref{eq:J1} and \eqref{eq:J2} yields the desired result.
\end{proof}

\bigskip 

The following lemma provides the bound for $J_{12}(i)$ in \eqref{eq:J12}, as used in the proof of Lemma \ref{lem:convergence}.

\begin{lemma} \label{lem:J12}
	Under the same assumptions as in Lemma \ref{lem:convergence},
	\begin{align*}
		J_{12}(i)
		\lesssim_{\bbP} \sqrt{ (\log n) / n }
	\end{align*}
	uniformly in $i \in [n]$.
\end{lemma}

\begin{proof}
	Observe that
	\begin{align*}
		| J_{12}(i) |
		& = \left| \frac{1}{n-1} \sum_{j \neq i} \sum_{i' \in \mathcal{N}_i} \sum_{i'' \in \mathcal{N}_i, i'' \neq i'} w(i, i') w(i, i'') ( A_{i'j} Z_{i'j} - \theta_{i'j} ) ( A_{i''j} Z_{i''j} - \theta_{i''j} ) \right| \\
		& \le \sum_{i' \in \mathcal{N}_i} \sum_{i'' \in \mathcal{N}_i, i'' \neq i'} w(i, i') w(i, i'') \left| \frac{1}{n-1} \sum_{j \neq i} ( A_{i'j} Z_{i'j} - \theta_{i'j} ) ( A_{i''j} Z_{i''j} - \theta_{i''j} ) \right| \\
		& = \frac{n}{n-1} \sum_{i' \in \mathcal{N}_i} \sum_{i'' \in \mathcal{N}_i, i'' \neq i'} w(i, i') w(i, i'') \left| \frac{1}{n} \sum_{j \neq i, i', i''} ( A_{i'j} Z_{i'j} - \theta_{i'j} ) ( A_{i''j} Z_{i''j} - \theta_{i''j} ) \right|,
	\end{align*}
	where the second line uses the triangle inequality and the last line follows from $A_{ii} = \theta_{ii} = 0$ for all $i \in [n]$.
	If we show that 
	\begin{align} \label{eq:J12-1}
		\max_{ (i', i''): i' \neq i'', i' \neq i, i'' \neq i } \left| \frac{1}{n} \sum_{j \neq i, i', i''} ( A_{i'j} Z_{i'j} - \theta_{i'j} ) ( A_{i''j} Z_{i''j} - \theta_{i''j} ) \right|
		\lesssim_{\bbP} \sqrt{ (\log n) / n },
	\end{align}
	then we obtain
	\begin{align*}
		| J_{12}(i) |
		& \lesssim_{\bbP} \sqrt{ (\log n) / n } \times \frac{n}{n-1} \sum_{i' \in \mathcal{N}_i} \sum_{i'' \in \mathcal{N}_i, i'' \neq i'} w(i, i') w(i, i'') \\
		& = \sqrt{ (\log n) / n } \times \frac{n}{n-1} \sum_{i'' \in \mathcal{N}_i, i'' \neq i'} w(i, i'') \\
		& = \sqrt{ (\log n) / n } \times \frac{n}{n-1} [ 1 - w(i, i') ] \\
		& \lesssim_{\bbP} \sqrt{ (\log n) / n },
	\end{align*}
	where the second and third lines follow from $\sum_{i' \in \mathcal{N}_i} w(i, i') = 1$ and the last line follows from \eqref{eq:wbound}.
	In what follows, we prove \eqref{eq:J12-1} by applying Hoeffding's inequality separately to the cases $\theta_{i'j} = P_{i'j}$ and $\theta_{i'j} = \mu_{i'j}(1)$.
	
	\bigskip 
	
	To prove \eqref{eq:J12-1} when $\theta_{i'j} = P_{i'j}$, recall that the indices $(i', i'', j)$ in \eqref{eq:J12-1} satisfy $i' \neq i''$, $i' \neq j$, and $i'' \neq j$.
	Then, by Assumption \ref{as:treatmentmat}, the summand in \eqref{eq:J12-1} satisfies
	\begin{align*}
		& \bbE[ ( A_{i'j} - P_{i'j} ) ( A_{i''j} - P_{i''j} ) \mid U_{i'}, U_{i''}, U_j ] \\
		& = \bbE \left[ \Big( \bm{1}\{ f(U_{i'}, U_j) \ge U_{i'j} \} - f(U_{i'}, U_j) \Big) \Big( \bm{1}\{ f(U_{i''}, U_j) \ge U_{i''j} \} - f(U_{i''}, U_j) \Big) \; \middle| \; U_{i'}, U_{i''}, U_j \right] \\
		& = {\bbE}_{( U_{i'j}, U_{i''j} )} \left[ \Big( \bm{1}\{ f(U_{i'}, U_j) \ge U_{i'j} \} - f(U_{i'}, U_j) \Big) \Big( \bm{1}\{ f(U_{i''}, U_j) \ge U_{i''j} \} - f(U_{i''}, U_j) \Big) \right] \\
		& = {\bbE}_{U_{i'j}} \Big[ \bm{1}\{ f(U_{i'}, U_j) \ge U_{i'j} \} - f(U_{i'}, U_j) \Big]  {\bbE}_{U_{i''j}} \Big[ \bm{1}\{ f(U_{i''}, U_j) \ge U_{i''j} \} - f(U_{i''}, U_j) \Big] \\
		& = \Big( f(U_{i'}, U_j) - f(U_{i'}, U_j) \Big) \Big( f(U_{i''}, U_j) - f(U_{i''}, U_j) \Big) \\
		& = 0.
	\end{align*}
	Here, the subscripts on $\mathbb{E}$ indicate that the expectations are taken with respect to the corresponding random vectors.
	By the law of iterated expectations, we obtain
	\begin{align*}
		\bbE[ ( A_{i'j} - P_{i'j} ) ( A_{i''j} - P_{i''j} ) \mid U_{i'}, U_{i''} ] = 0.
	\end{align*}
	Furthermore, conditional on $(U_{i'}, U_{i''})$, the random variable $( A_{i'j} - P_{i'j} ) ( A_{i''j} - P_{i''j} )$ is measurable with respect to $(U_j, U_{i'j}, U_{i''j})$ and is therefore independent across $j$ ($j \neq i', i''$) by Assumption \ref{as:treatmentmat}.
	Noting that $|( A_{i'j} - P_{i'j} ) ( A_{i''j} - P_{i''j} )| \le 1$, Hoeffding's inequality implies that for any $\epsilon > 0$,
	\begin{align*}
		\text{pr}\left( \left| \frac{1}{n} \sum_{j \neq i, i', i''} ( A_{i'j} - P_{i'j} ) ( A_{i''j} - P_{i''j} ) \right| \ge \epsilon \; \middle| \; U_{i'}, U_{i''} \right)
		& \le 2 \exp\left( - \frac{2 \epsilon^2}{ \sum_{j \neq i, i', i''} ( 2 n^{-1} )^2 } \right) \\
		& = 2 \exp\left( - \frac{n^2 \epsilon^2}{ 2 (n - 3) } \right) \\
		& \le 2 \exp\left( - \frac{n \epsilon^2}{ 2 } \right),
	\end{align*}
	where the last line follows from $(n - 3) < n$ for all $n \ge 4$.
	Hence, by Boole's inequality and the law of iterated expectations,
	\begin{align*}
		& \text{pr}\left( \max_{ (i', i''): i'' \neq i', i' \neq i, i'' \neq i } \left| \frac{1}{n} \sum_{j \neq i, i', i''} ( A_{i'j} - P_{i'j} ) ( A_{i''j} - P_{i''j} ) \right| \ge \epsilon \right) \\
		& \le \sum_{ (i', i''): i'' \neq i', i' \neq i, i'' \neq i } \text{pr}\left( \left| \frac{1}{n} \sum_{j \neq i, i', i''} ( A_{i'j} - P_{i'j} ) ( A_{i''j} - P_{i''j} ) \right| \ge \epsilon \right) \\
		& \le 2 n^2 \exp\left( - \frac{n \epsilon^2}{ 2 } \right).
	\end{align*}
	Setting $\epsilon = \sqrt{2 C_3 (\log n) / n}$ for a constant $C_3 > 2$ yields
	\begin{align*}
		2 n^2 \exp\left( - \frac{n \epsilon^2}{ 2 } \right)
		= 2 n^2 \exp(- C_3 \log n)
		= 2 n^{2 - C_3} 
		= o(1),
	\end{align*}
	which establishes \eqref{eq:J12-1} in the case where $\theta_{i'j} = P_{i'j}$.
	
	\bigskip 
	
	To prove \eqref{eq:J12-1} when $\theta_{i'j} = \mu_{i'j}(1)$, we rewrite the conditional mean of the summand in \eqref{eq:J12-1} as follows:
	\begin{align} \label{eq:J12-2}
	\begin{split}
		& \bbE \left[ \Big( A_{i'j} Y_{i'j} - \mu_{i'j}(1) \Big) \Big( A_{i''j} Y_{i''j} - \mu_{i''j}(1) \Big) \; \middle| \; U_{i'}, U_{i''}, U_j \right] \\
		& = \bbE [ A_{i'j} Y_{i'j} A_{i''j} Y_{i''j} \mid U_{i'}, U_{i''}, U_j ] - \mu_{i''j}(1) \bbE [ A_{i'j} Y_{i'j} \mid U_{i'}, U_{i''}, U_j ] \\
		& \quad - \mu_{i'j}(1) \bbE [ A_{i''j} Y_{i''j} \mid U_{i'}, U_{i''}, U_j ] + \mu_{i'j}(1) \mu_{i''j}(1).
	\end{split}
	\end{align}
	For the first term on the right-hand side of \eqref{eq:J12-2}, we have
	\begin{align*}
		\bbE [ A_{i'j} Y_{i'j} A_{i''j} Y_{i''j} \mid U_{i'}, U_{i''}, U_j ]
		& = \bbE[ A_{i'j} Y_{i'j}(1) A_{i''j} Y_{i''j}(1) \mid U_{i'}, U_{i''}, U_j ] \\
		& = \bbE \Big[ \bbE[ A_{i'j} Y_{i'j}(1) \mid U_{i'}, U_{i''}, U_j, U_{i''j}, \xi_{i''j}(1) ] A_{i''j} Y_{i''j}(1) \; \Big| \; U_{i'}, U_{i''}, U_j \Big] \\
		& = \bbE \Big[ \bbE[ A_{i'j} Y_{i'j}(1) \mid U_{i'}, U_j ] A_{i''j} Y_{i''j}(1) \; \Big| \; U_{i'}, U_{i''}, U_j \Big] \\
		& = \mu_{i'j}(1) \bbE[ A_{i''j} Y_{i''j}(1) \mid U_{i'}, U_{i''}, U_j ] \\
		& = \mu_{i'j}(1) \bbE[ A_{i''j} Y_{i''j}(1) \mid U_{i''}, U_j ] \\
		& = \mu_{i'j}(1) \mu_{i''j}(1),
	\end{align*}
	where the second line follows from the law of iterated expectations, the third line uses the conditional independence between $(U_{i'j}, \xi_{i'j}(1))$ and $(U_{i''}, U_{i''j}, \xi_{i''j}(1))$ given $(U_{i'}, U_j)$ by Assumption \ref{as:independence2}(i), the fourth and last lines follow from the fact that Assumptions \ref{as:independence1} and \ref{as:PO}(i) imply 
	\begin{align} \label{eq:injective}
		\begin{split}
			\bbE[ A_{i'j} Y_{i'j}(1) \mid U_{i'}, U_j ]
			= P_{i'j} \bbE[ Y_{i'j}(1) \mid U_{i'}, U_j ]
			= P_{i'j} \bbE[ Y_{i'j}(1) \mid P_{i'j} ]
			= \mu_{i'j}(1),
		\end{split}
	\end{align}
	and the fifth line uses the conditional independence between $(U_{i''j}, \xi_{i''j}(1))$ and $U_{i'}$ given $(U_{i''}, U_j)$ by Assumption \ref{as:independence2}(i).
	Similarly, for the second and third terms on the right-hand side of \eqref{eq:J12-2}, we have
	\begin{align*}
		\mu_{i''j}(1) \bbE [ A_{i'j} Y_{i'j} \mid U_{i'}, U_{i''}, U_j ] 
		= \mu_{i'j}(1) \bbE [ A_{i''j} Y_{i''j} \mid U_{i'}, U_{i''}, U_j ]
		= \mu_{i'j}(1) \mu_{i''j}(1).
	\end{align*}
	Thus, the left-hand side of \eqref{eq:J12-2} is equal to 0, and the law of iterated expectations leads to
	\begin{align*}
		\bbE \left[ \Big( A_{i'j} Y_{i'j} - \mu_{i'j}(1) \Big) \Big( A_{i''j} Y_{i''j} - \mu_{i''j}(1) \Big) \; \middle| \; U_{i'}, U_{i''} \right]
		= 0.
	\end{align*}
	Furthermore, conditional on $(U_{i'}, U_{i''})$, the random variable $[ A_{i'j} Y_{i'j} - \mu_{i'j}(1) ] [ A_{i''j} Y_{i''j} - \mu_{i''j}(1) ]$ is measurable with respect to $(U_j, U_{i'j}, U_{i''j}, \xi_{i'j}(1),  \xi_{i''j}(1))$ and is therefore independent across $j$ ($j \neq i', i''$) by Assumption \ref{as:independence2}(ii).
	Noting that $|[ A_{i'j} Y_{i'j} - \mu_{i'j}(1) ] [ A_{i''j} Y_{i''j} - \mu_{i''j}(1) ]| \le 4 C_Y^2$ by Assumption \ref{as:PO}(ii), Hoeffding's inequality implies that for any $\epsilon > 0$,
	\begin{align*}
		\text{pr}\left( \left| \frac{1}{n} \sum_{j \neq i, i', i''} \Big( A_{i'j} Y_{i'j} - \mu_{i'j}(1) \Big) \Big( A_{i''j} Y_{i''j} - \mu_{i''j}(1) \Big) \right| \ge \epsilon \; \middle| \; U_{i'}, U_{i''} \right)
		& \le 2 \exp\left( - \frac{2 \epsilon^2}{ \sum_{j \neq i, i', i''} ( 8 C_Y^2 n^{-1} )^2 } \right) \\
		& = 2 \exp\left( - \frac{n^2 \epsilon^2}{ 32 C_Y^4 (n - 3) } \right) \\
		& \le  2 \exp\left( - \frac{n \epsilon^2}{ 32 C_Y^4 } \right) \\
		& \le  2 \exp\left( - \frac{n \epsilon^2}{ C_4 } \right),
	\end{align*}
	where the third line follows from $(n - 3) < n$ for all $n \ge 4$, and $C_4$ is a positive constant such that $C_4 > 32 C_Y^4$.
	Hence, by Boole's inequality and the law of iterated expectations,
	\begin{align*}
		& \text{pr}\left( \max_{ (i', i''): i'' \neq i', i' \neq i, i'' \neq i } \left| \frac{1}{n} \sum_{j \neq i, i', i''} \Big( A_{i'j} Y_{i'j} - \mu_{i'j}(1) \Big) \Big( A_{i''j} Y_{i''j} - \mu_{i''j}(1) \Big) \right| \ge \epsilon \right) \\
		& \le \sum_{ (i', i''): i'' \neq i', i' \neq i, i'' \neq i } \text{pr}\left( \left| \frac{1}{n} \sum_{j \neq i, i', i''} \Big( A_{i'j} Y_{i'j} - \mu_{i'j}(1) \Big) \Big( A_{i''j} Y_{i''j} - \mu_{i''j}(1) \Big) \right| \ge \epsilon \right) \\
		& \le 2 n^2 \exp\left( - \frac{n \epsilon^2}{ C_4 } \right).
	\end{align*}
	Setting $\epsilon = \sqrt{C_4 C_5 (\log n) / n}$ for a constant $C_5 > 2$ yields
	\begin{align*}
		2 n^2 \exp\left( - \frac{n \epsilon^2}{ C_4 } \right)
		\le 2 n^2 \exp(- C_5 \log n)
		= 2 n^{2 - C_5} 
		= o(1),
	\end{align*}
	which establishes \eqref{eq:J12-1} in the case where $\theta_{i'j} = \mu_{i'j}(1)$.
\end{proof}

\section{Permutation Inference} \label{sec:permutation}

As mentioned in Remark \ref{remark:permutation}, we consider testing the following sharp null hypothesis:
\begin{align*}
    \mathbb{H}_0: \; Y_{ij}(0) = Y_{ij}(1) \; \text{for all $(i, j)$}.
\end{align*}
Under $\mathbb{H}_0$, the observed outcome $Y_{ij}$ satisfies $Y_{ij} = Y_{ij}(0) = Y_{ij}(1)$, and hence the potential outcomes schedule $W \equiv \{ Y(0), Y(1) \}$ is imputable from the observed outcomes $Y$, where $Y(0) = (Y_{ij}(0))$, $Y(1) = (Y_{ij}(1))$, and $Y = (Y_{ij})$.
Consequently, $\mathbb{H}_0$ implies that $\tau_{ij} = 0$ for all $(i, j)$, so that the rejection of $\mathbb{H}_0$ indicates the significant presence of $\tau_{ij}$ for some $(i, j)$.

For testing $\mathbb{H}_0$, let $T(A, W)$ denote any real-valued test statistic, which can be computed under $\mathbb{H}_0$.
For example, we can consider the following test statistic:
\begin{align} \label{eq:teststat}
	T(A, W) = \frac{1}{n(n - 1)} \sum_{i \in [n]} \sum_{j \neq i} \hat \tau_{ij}^2.
\end{align}
Denote the set of all permutations of $[n]$ as $\bm{G}$, such that $|\bm{G}| = n!$.
For a given permutation $g \in \bm{G}$, let $gA = (A_{g(i)g(j)})$ be the permuted treatment matrix.
Then, the $p$-value for testing $\mathbb{H}_0$ is defined by 
\begin{align} \label{eq:pval}
    P(A, W)
    \equiv \frac{1}{n!} \sum_{g \in \bm{G}} \bm{1}\{ T(gA, W) \ge T(A, W) \}.
\end{align}
With a nominal significance level $\alpha \in [0, 1]$, we reject $\mathbb{H}_0$ at level $\alpha$ if $P(A, W) \le \alpha$.

The next theorem demonstrates the size control property of our test under the additional assumption of independence between $A$ and $Y(0)$. 
Note that this additional condition does not preclude endogeneity arising from the dependence between $A_{ij}$ and the treatment effect $Y_{ij}(1) - Y_{ij}(0)$. 
For example, the assumption accommodates selection on returns, where units with higher $Y_{ij}(1) - Y_{ij}(0)$ are more likely to receive the treatment. 
In the presence of more general forms of endogeneity, however, there is no guarantee that \eqref{eq:size} holds, and our permutation inference may not be valid.

\begin{theorem}[Size control]\label{thm:size}
    Suppose that Assumption \ref{as:treatmentmat} holds.
    Also assume that $A$ is independent of $Y(0)$.
    Under $\mathbb{H}_0$, we have
    \begin{align} \label{eq:size}
        \text{pr}[ P(A, W) \le \alpha \mid A \in \mathcal{S}_a, W ] \le \alpha
        \quad \text{for any $\alpha \in [0, 1]$ and $a \in \mathrm{supp}[A]$,}
    \end{align}
    where $\mathcal{S}_a \equiv \{ ga: g \in \bm{G} \}$ and the probability is taken with respect to the conditional distribution of $A$ given $A \in \mathcal{S}_a$ and $W$.
\end{theorem}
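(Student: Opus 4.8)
The plan is to recast the claim as the classical group-invariance (randomization) argument for permutation tests, with the treatment matrix $A$ serving as the randomization variable and $\bm{G}$ acting by simultaneous relabeling of rows and columns, $gA = (A_{g(i)g(j)})$. First I would dispose of the conditioning on $W$. Under $\mathbb{H}_0$ we have $Y_{ij} = Y_{ij}(0) = Y_{ij}(1)$, so $W = \{ Y(0), Y(1) \}$ is a deterministic function of $Y(0)$; combined with the assumed independence of $A$ and $Y(0)$, this yields $A \indep W$. Hence the conditional law of $A$ given $W$ coincides with its marginal law, which is exchangeable by Assumption \ref{as:treatmentmat}, i.e.\ $gA \overset{d}{=} A$ given $W$ for every fixed $g \in \bm{G}$.

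Next I would show that further conditioning on the orbit event $\{ A \in \mathcal{S}_a \}$ preserves this invariance. Since $\mathcal{S}_a = \{ ga : g \in \bm{G} \}$ is a single $\bm{G}$-orbit, it is $\bm{G}$-invariant: $g \mathcal{S}_a = \mathcal{S}_a$ for every $g$. Writing $Q$ for the conditional law of $A$ given $\{ A \in \mathcal{S}_a, W \}$, then for any bounded $\phi$ and fixed $g$, the invariance of the event together with $gA \overset{d}{=} A$ (given $W$) gives $\bbE_Q[ \phi(gA) ] = \bbE_Q[ \phi(A) ]$, by rewriting $\bm{1}\{ A \in \mathcal{S}_a \} = \bm{1}\{ gA \in \mathcal{S}_a \}$ and substituting the distributional identity. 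Thus under $Q$ the matrix $A$ is again $\bm{G}$-invariant: $gA \overset{d}{=} A$ under $Q$ for every fixed $g$.

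The combinatorial core is then a deterministic statement holding for every fixed $(A, W)$. Using that $\bm{G}$ is a group, relabeling the summation index shows $P(hA, W) = \frac{1}{n!} \sum_{g \in \bm{G}} \bm{1}\{ T(gA, W) \ge T(hA, W) \}$, so that $P(hA, W)$ is exactly the upper empirical tail rank of $T(hA, W)$ within the multiset $\{ T(gA, W) : g \in \bm{G} \}$. Ordering the distinct values of this multiset and summing multiplicities, one checks that the number of $h \in \bm{G}$ with $P(hA, W) \le \alpha$ is at most $\alpha \cdot n!$; that is, $\frac{1}{n!} \sum_{h \in \bm{G}} \bm{1}\{ P(hA, W) \le \alpha \} \le \alpha$ for every $(A, W)$. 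Finally I would symmetrize using the invariance from the previous step: for each fixed $h$, the identity $A \overset{d}{=} hA$ under $Q$ gives $\bbP_Q[ P(A, W) \le \alpha ] = \bbE_Q[ \bm{1}\{ P(hA, W) \le \alpha \} ]$; averaging over $h \in \bm{G}$ and inserting the deterministic bound yields
\[
    \bbP_Q[ P(A, W) \le \alpha ] = \bbE_Q\Big[ \tfrac{1}{n!} \sum_{h \in \bm{G}} \bm{1}\{ P(hA, W) \le \alpha \} \Big] \le \alpha,
\]
which is exactly \eqref{eq:size}.

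The point requiring the most care — the main obstacle — is the second step: verifying that conditioning simultaneously on the orbit $\mathcal{S}_a$ and on $W$ leaves the distribution of $A$ invariant under the \emph{full} group $\bm{G}$, rather than only over distinct orbit representatives. It is essential here that the sum defining $P(A, W)$ runs over all $n!$ permutations, counting stabilizer elements of $a$ with their multiplicity; this is precisely what makes the invariance in the second step and the counting in the third step line up, and what allows the bound to hold uniformly over all $a \in \mathrm{supp}[A]$.
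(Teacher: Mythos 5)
Your proof is correct, and it takes a genuinely different (though classical) route from the paper's. The paper also begins by observing that, under $\mathbb{H}_0$ plus the independence of $A$ and $Y(0)$, the schedule $W$ is a function of $Y(0)$ and hence $A \indep W$; but from there it introduces an independent copy $A^*$ of $A$ (independent of both $A$ and $W$), rewrites the $p$-value in \eqref{eq:pval} as
\begin{align}
    P(A, W) = \bbP^*\big[ T(A^*, W) \ge T(A, W) \mid A^* \in \mathcal{S}_A, A, W \big] = F_a\big({-T(A, W)}; W\big)
    \quad \text{on } \{A \in \mathcal{S}_a\},
\end{align}
where $F_a(\cdot\,; W)$ is the conditional CDF of $-T(A, W)$ given $A \in \mathcal{S}_a$ and $W$, and then concludes by the probability-integral-transform inequality $\bbP[F(X) \le \alpha] \le \alpha$ for a random variable $X$ with CDF $F$. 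You instead work directly with the group action: you establish $\bm{G}$-invariance of the conditional law $Q$ of $A$ given $\{A \in \mathcal{S}_a, W\}$ (using that $\mathcal{S}_a$ is a single orbit), prove the deterministic rank-counting bound $\frac{1}{n!}\sum_{h \in \bm{G}} \bm{1}\{P(hA, W) \le \alpha\} \le \alpha$ pointwise in $(A, W)$, and symmetrize --- this is the Hoeffding/Lehmann--Romano randomization-test argument. The two proofs consume exactly the same structural inputs (exchangeability of $A$ and $A \indep W$ under the sharp null), so neither is more general here; what differs is the packaging. Your version is more self-contained, since the discrete counting step is proved rather than delegated to the CDF-transform fact, and it makes explicit why summing over all $n!$ permutations (with stabilizer multiplicities) is harmless. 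The paper's version is shorter and, by exhibiting $P(A, W)$ as an exact evaluation of the conditional CDF of the test statistic on the orbit, makes transparent where conservativeness can arise from ties and the discreteness of the orbit. One small point worth adding to your write-up for completeness: conditioning on $\{A \in \mathcal{S}_a\}$ is legitimate because $a \in \mathrm{supp}[A]$ and $A \indep W$ give $\bbP[A \in \mathcal{S}_a \mid W] \ge \bbP[A = a] > 0$, so the conditional law $Q$ is well defined.
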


\begin{proof}
    First, note that $P(A, W)$ in \eqref{eq:pval} is computable because $T(gA, W)$ for any $g \in \bm{G}$ is imputable under $\mathbb{H}_0$.
    Then, denoting $\mathcal{S}_A = \{ gA: g \in \bm{G} \}$, we show that under $\mathbb{H}_0$
    \begin{align} \label{eq:rewrite-pval}
        P(A, W)
        = {\text{pr}}^* [ T(A^*, W) \ge T(A, W) \mid A^* \in \mathcal{S}_A, A, W ],
    \end{align}
    where $A^*$ denotes an independent copy of $A$ such that $A^*$, $A$, and $W$ are mutually independent, and ${\text{pr}}^*$ indicates the (conditional) probability taken with respect to the (conditional) distribution of $A^*$.
    Note that without the independence assumption between $A$ and $Y(0)$, it is not possible to consider such $A^*$ independent of $W$ under $\mathbb{H}_0$.
    The right-hand side of \eqref{eq:rewrite-pval} can be rewritten as
    \begin{align*}
        {\text{pr}}^* [ T(A^*, W) \ge T(A, W) \mid A^* \in \mathcal{S}_A, A, W ]
        & = {\text{pr}}^* [ T(A^*, W) \ge T(A, W) \mid A^* \in \mathcal{S}_A ] \\
        & = \frac{1}{n!} \sum_{g \in \bm{G}} \bm{1}\{ T(g A, W) \ge T(A, W) \} \\
        & = P(A, W),
    \end{align*}
    where the first equality follows from the mutual independence between $A^*$, $A$, and $W$ and the second equality follows from the exchangeability in Assumption \ref{as:treatmentmat}.

    For any fixed $a \in \mathrm{supp}[A]$, let $F_a(\cdot; W)$ denote the CDF of $-T(A, W)$ given $A \in \mathcal{S}_a$ and $W$.
    Given $A \in \mathcal{S}_a$ (so $\mathcal{S}_A = \mathcal{S}_a$ by construction), the right-hand side of \eqref{eq:rewrite-pval} can be written as
    \begin{align*}
        P(A, W)
        & = {\text{pr}}^* [ T(A^*, W) \ge T(A, W) \mid A^* \in \mathcal{S}_a, A, W ] \\
        & = {\text{pr}}^* [ T(A^*, W) \ge T(A, W) \mid A^* \in \mathcal{S}_a, W ] \\
        & = F_a(-T(A, W); W),
    \end{align*}
    where the second equality follows from the mutual independence between $A^*$, $A$, and $W$.
    The desired result \eqref{eq:size} follows from the fact that $-T(A, W)$ has the CDF $F_a(\cdot; W)$ conditional on $A \in \mathcal{S}_a$ and $W$.
\end{proof}

\section{Supplementary Materials for the Numerical Simulation} \label{sec:simulation2}

\subsection{Supplementary tables for Section \ref{sec:simulation1}}\label{app:table}

Tables \ref{tab:MC1} and \ref{tab:conformal} summarize the detailed numerical results of the simulation analysis conducted in Section \ref{sec:simulation1}.
    
\subsection{Additional simulation results}

Here, we present the supplementary simulation results that complement the discussion in Section \ref{sec:simulation1}.

To verify the impacts of Assumption \ref{as:kernel}(ii) that requires the kernel function $\mathcal{K}$ to be bounded away from zero, we examine how the simulation results change when we use the standard triangular and Epanechnikov kernels that vanish outside $[0, 1]$.
We consider the same simulation setup as in Section \ref{sec:simulation1} except that we use the standard triangular and Epanechnikov kernels.
The simulation results are given in Table \ref{tab:MC3}, where 4 and 5 in the column $\mathcal{K}$ indicate the standard triangular and Epanechnikov kernels, respectively.
Differently from the main simulation results in Table \ref{tab:MC1}, the average mean squared error for the dyadic ATE estimation in Table \ref{tab:MC3} is drastically worse, which highlights the importance of Assumption \ref{as:kernel}(ii).

Recall that we set the bandwidth parameter $b$ as the $h = \sqrt{(\log n) / n}$-th quantile of $\{ \tilde d(i, j): j \neq i \}$ for the main Monte Carlo experiment in Section \ref{sec:simulation1}.
To investigate how the performance of our method depends on this bandwidth selection, we report the simulation results when setting $h = 0.5 \sqrt{(\log n) / n}$ and $h = 2 \sqrt{(\log n) / n}$ in Tables \ref{tab:MC4} and \ref{tab:MC5}.
Comments similar to the main simulation results reported in Table \ref{tab:MC1} can apply, and interestingly, the largest bandwidth outperforms the other choices in this simulation setup.

To examine the validity and the testing power of permutation inference developed in Appendix \ref{sec:permutation}, Table \ref{tab:permutation} reports the rejection frequencies for each data generating process across nominal significance levels $\alpha \in \{ 0.1, 0.05, 0.01 \}$.
The rejection frequencies in the null setup are sufficiently close to their nominal levels, which corroborates the result in Theorem \ref{thm:size}.
In contrast, in the linear and quadratic DATE setups, the rejection frequencies are sufficiently high and tend to 100\% as $n$ increases, suggesting the consistency of our permutation test.

\begin{table}[p]
    \caption{Monte Carlo simulation results: Neighbourhood kernel smoothing estimation} \label{tab:MC1}
        \begin{center}
        \begin{footnotesize}
        \begin{tabular}[h]{rrrrrrrrrrrr}
            \toprule
            \multicolumn{2}{c}{DGP} & \multicolumn{2}{c}{ } & \multicolumn{2}{c}{DATE} & \multicolumn{2}{c}{IATE} & \multicolumn{2}{c}{GATE} & \multicolumn{2}{c}{PS} \\
            \cmidrule(l{3pt}r{3pt}){1-2} \cmidrule(l{3pt}r{3pt}){5-6} \cmidrule(l{3pt}r{3pt}){7-8} \cmidrule(l{3pt}r{3pt}){9-10} \cmidrule(l{3pt}r{3pt}){11-12}
            $Y$ & $P$ & $n$ & $\mathcal{K}$ & AvgBias & AvgMSE & AvgBias & AvgMSE & AvgBias & AvgMSE & AvgBias & AvgMSE \\
            \midrule
            1 & 1 & 40 & 1 & 0.19 & 34.54 & 0.19 & 0.89 & 0.19 & 0.40 & -1.73 & 1.76\\
             &  &  & 2 & 0.21 & 35.52 & 0.21 & 0.91 & 0.21 & 0.40 & -1.74 & 1.80\\
             &  &  & 3 & 0.21 & 35.99 & 0.21 & 0.92 & 0.21 & 0.40 & -1.74 & 1.81\\
            \cmidrule(l{3pt}r{3pt}){3-12}
             &  & 80 & 1 & 0.07 & 22.30 & 0.07 & 0.29 & 0.07 & 0.10 & -1.23 & 1.16\\
             &  &  & 2 & 0.07 & 22.82 & 0.07 & 0.29 & 0.07 & 0.10 & -1.22 & 1.16\\
             &  &  & 3 & 0.07 & 23.11 & 0.07 & 0.30 & 0.07 & 0.10 & -1.22 & 1.16\\
            \cmidrule(l{3pt}r{3pt}){2-12}
             & 2 & 40 & 1 & -0.29 & 44.48 & -0.29 & 1.18 & -0.29 & 0.40 & -1.75 & 1.46\\
             &  &  & 2 & -0.30 & 45.63 & -0.30 & 1.21 & -0.30 & 0.40 & -1.76 & 1.47\\
             &  &  & 3 & -0.30 & 46.08 & -0.30 & 1.22 & -0.30 & 0.41 & -1.76 & 1.47\\
            \cmidrule(l{3pt}r{3pt}){3-12}
             &  & 80 & 1 & 0.12 & 33.21 & 0.12 & 0.42 & 0.12 & 0.11 & -0.94 & 0.68\\
             &  &  & 2 & 0.11 & 34.00 & 0.11 & 0.43 & 0.11 & 0.12 & -0.95 & 0.71\\
             &  &  & 3 & 0.11 & 34.35 & 0.11 & 0.44 & 0.11 & 0.12 & -0.95 & 0.72\\
            \midrule 
            2 & 1 & 40 & 1 & 3.65 & 63.68 & 3.65 & 2.30 & 3.65 & 0.86 & -1.73 & 1.76\\
             &  &  & 2 & 3.60 & 65.34 & 3.60 & 2.32 & 3.60 & 0.86 & -1.74 & 1.80\\
             &  &  & 3 & 3.58 & 66.15 & 3.58 & 2.33 & 3.58 & 0.86 & -1.74 & 1.81\\
            \cmidrule(l{3pt}r{3pt}){3-12}
             &  & 80 & 1 & 2.39 & 40.90 & 2.39 & 0.85 & 2.39 & 0.23 & -1.23 & 1.16\\
             &  &  & 2 & 2.32 & 41.77 & 2.32 & 0.85 & 2.32 & 0.23 & -1.22 & 1.16\\
             &  &  & 3 & 2.30 & 42.27 & 2.30 & 0.85 & 2.30 & 0.23 & -1.22 & 1.16\\
            \cmidrule(l{3pt}r{3pt}){2-12}
             & 2 & 40 & 1 & 4.72 & 82.47 & 4.72 & 3.13 & 4.72 & 0.94 & -1.75 & 1.46\\
             &  &  & 2 & 4.33 & 84.12 & 4.33 & 3.10 & 4.33 & 0.91 & -1.76 & 1.47\\
             &  &  & 3 & 4.26 & 84.85 & 4.26 & 3.10 & 4.26 & 0.91 & -1.76 & 1.47\\
            \cmidrule(l{3pt}r{3pt}){3-12}
             &  & 80 & 1 & 1.17 & 60.59 & 1.17 & 1.21 & 1.17 & 0.22 & -0.94 & 0.68\\
             &  &  & 2 & 1.04 & 61.87 & 1.04 & 1.19 & 1.04 & 0.22 & -0.95 & 0.71\\
             &  &  & 3 & 1.02 & 62.45 & 1.02 & 1.19 & 1.02 & 0.22 & -0.95 & 0.72\\
            \midrule 
            3 & 1 & 40 & 1 & 5.94 & 67.76 & 5.94 & 3.32 & 5.94 & 1.10 & -1.73 & 1.76\\
             &  &  & 2 & 5.84 & 69.32 & 5.84 & 3.30 & 5.84 & 1.09 & -1.74 & 1.80\\
             &  &  & 3 & 5.81 & 70.12 & 5.81 & 3.31 & 5.81 & 1.08 & -1.74 & 1.81\\
            \cmidrule(l{3pt}r{3pt}){3-12}
             &  & 80 & 1 & 4.02 & 43.38 & 4.02 & 1.40 & 4.02 & 0.34 & -1.23 & 1.16\\
             &  &  & 2 & 3.91 & 44.14 & 3.91 & 1.37 & 3.91 & 0.33 & -1.22 & 1.16\\
             &  &  & 3 & 3.88 & 44.62 & 3.88 & 1.37 & 3.88 & 0.33 & -1.22 & 1.16\\
            \cmidrule(l{3pt}r{3pt}){2-12}
             & 2 & 40 & 1 & 7.90 & 87.54 & 7.90 & 4.89 & 7.90 & 1.37 & -1.75 & 1.46\\
             &  &  & 2 & 7.28 & 88.74 & 7.28 & 4.71 & 7.28 & 1.29 & -1.76 & 1.47\\
             &  &  & 3 & 7.16 & 89.40 & 7.16 & 4.69 & 7.16 & 1.27 & -1.76 & 1.47\\
            \cmidrule(l{3pt}r{3pt}){3-12}
             &  & 80 & 1 & 1.84 & 62.09 & 1.84 & 1.94 & 1.84 & 0.24 & -0.94 & 0.68\\
             &  &  & 2 & 1.64 & 63.26 & 1.64 & 1.85 & 1.64 & 0.24 & -0.95 & 0.71\\
             &  &  & 3 & 1.60 & 63.83 & 1.60 & 1.84 & 1.60 & 0.24 & -0.95 & 0.72\\
            \bottomrule
        \end{tabular}
        \end{footnotesize}
        \end{center}
    \begin{footnotesize}
        Note: All entries are multiplied by $100$.

        \medskip 
        
        Abbreviations: DATE (dyadic ATE), IATE (individual ATE), GATE (global ATE), PS (propensity score), AvgBias (average bias), AvgMSE (average mean squared error).
    
        $\text{DGP ($Y$)} \in \{\text{1 (null)}, \text{2 (linear dyadic ATE)}, \text{3 (quadratic dyadic ATE)}\}$.
    
        $\text{DGP ($P$)} \in \{\text{1 (stochastic block model)}, \text{2 (degree heterogeneity model)}\}$.
    
        $\mathcal{K} \in \{\text{1 (uniform)}, \text{2 (triangular)}, \text{3 (Epanechnikov)}\}$, with a baseline constant $\underline{C}_{\mathcal{K}} = 1$.
    \end{footnotesize}
\end{table}

\begin{table}[p]
	\caption{Monte Carlo simulation results: Row-wise full conformal inference ($\alpha = 0.1$)} \label{tab:conformal}
	\begin{center}
    \begin{footnotesize}
	\begin{tabular}[h]{rrrrrrr}
		\toprule
		\multicolumn{2}{c}{DGP} & & \multicolumn{2}{c}{$Y_{i,n+1}(0) | \{A_{i,n+1} = 0\}$} & \multicolumn{2}{c}{$Y_{i,n+1}(1) | \{A_{i,n+1} = 1\}$} \\
		\cmidrule(l{3pt}r{3pt}){1-2} \cmidrule(l{3pt}r{3pt}){4-5} \cmidrule(l{3pt}r{3pt}){6-7}
        $Y$ & $P$ & $n$ & AvgCoverage (\%) & AvgLength & AvgCoverage (\%) & AvgLength\\
		\midrule
		1 & 1 & 40 & 91.5 & 3.7 & 94.4 & 5.6\\
         &  & 80 & 90.8 & 3.5 & 91.9 & 3.8\\
        \cmidrule(l{3pt}r{3pt}){2-7}
         & 2 & 40 & 91.6 & 3.7 & 92.7 & 5.0\\
         &  & 80 & 90.9 & 3.5 & 91.8 & 3.8\\
        \midrule 
        2 & 1 & 40 & 91.5 & 3.7 & 94.2 & 6.8\\
         &  & 80 & 90.8 & 3.5 & 92.0 & 5.5\\
        \cmidrule(l{3pt}r{3pt}){2-7}
         & 2 & 40 & 91.6 & 3.7 & 93.4 & 6.4\\
         &  & 80 & 90.9 & 3.5 & 91.5 & 5.4\\
        \midrule 
        3 & 1 & 40 & 91.5 & 3.7 & 94.2 & 6.9\\
         &  & 80 & 90.8 & 3.5 & 91.9 & 5.5\\
        \cmidrule(l{3pt}r{3pt}){2-7}
         & 2 & 40 & 91.6 & 3.7 & 93.4 & 6.5\\
         &  & 80 & 90.9 & 3.5 & 91.6 & 5.4\\
		\bottomrule
	\end{tabular}
    \end{footnotesize}
    \end{center}
    \begin{footnotesize}
        Abbreviations: AvgCoverage (average coverage), AvgLength (average length).
    
        $\text{DGP ($Y$)} \in \{\text{1 (null)}, \text{2 (linear dyadic ATE)}, \text{3 (quadratic dyadic ATE)}\}$.
    
        $\text{DGP ($P$)} \in \{\text{1 (stochastic block model)}, \text{2 (degree heterogeneity model)}\}$.
    \end{footnotesize}
\end{table}

\begin{table}[p]
	\caption{Monte Carlo simulation results: Impacts of Assumption \ref{as:kernel}(ii)} \label{tab:MC3}
	\begin{center}
	\begin{footnotesize}
	\begin{tabular}[h]{rrrrrrrrrrrr}
		\toprule
		\multicolumn{2}{c}{DGP} & \multicolumn{2}{c}{ } & \multicolumn{2}{c}{DATE} & \multicolumn{2}{c}{IATE} & \multicolumn{2}{c}{GATE} & \multicolumn{2}{c}{PS} \\
		\cmidrule(l{3pt}r{3pt}){1-2} \cmidrule(l{3pt}r{3pt}){5-6} \cmidrule(l{3pt}r{3pt}){7-8} \cmidrule(l{3pt}r{3pt}){9-10} \cmidrule(l{3pt}r{3pt}){11-12}
		$Y$ & $P$ & $n$ & $\mathcal{K}$ & AvgBias & AvgMSE & AvgBias & AvgMSE & AvgBias & AvgMSE & AvgBias & AvgMSE \\
		\midrule
		1 & 1 & 40 & 4 & 0.18 & 106.45 & 0.18 & 2.71 & 0.18 & 0.39 & -1.98 & 3.71\\
         &  &  & 5 & 0.18 & 105.72 & 0.18 & 2.69 & 0.18 & 0.39 & -1.98 & 3.67\\
        \cmidrule(l{3pt}r{3pt}){3-12}
         &  & 80 & 4 & 0.04 & 64.36 & 0.04 & 0.81 & 0.04 & 0.11 & -1.06 & 1.87\\
         &  &  & 5 & 0.04 & 63.38 & 0.04 & 0.80 & 0.04 & 0.11 & -1.06 & 1.84\\
        \cmidrule(l{3pt}r{3pt}){2-12}
         & 2 & 40 & 4 & -0.33 & 87.19 & -0.33 & 2.26 & -0.33 & 0.41 & -1.69 & 2.38\\
         &  &  & 5 & -0.33 & 85.97 & -0.33 & 2.23 & -0.33 & 0.41 & -1.69 & 2.32\\
        \cmidrule(l{3pt}r{3pt}){3-12}
         &  & 80 & 4 & 0.12 & 58.99 & 0.12 & 0.75 & 0.12 & 0.12 & -0.86 & 1.16\\
         &  &  & 5 & 0.12 & 57.77 & 0.12 & 0.73 & 0.12 & 0.12 & -0.86 & 1.12\\
        \midrule 
        2 & 1 & 40 & 4 & 0.17 & 188.06 & 0.17 & 5.57 & 0.17 & 0.71 & -1.98 & 3.71\\
         &  &  & 5 & 0.18 & 186.92 & 0.18 & 5.55 & 0.18 & 0.71 & -1.98 & 3.67\\
        \cmidrule(l{3pt}r{3pt}){3-12}
         &  & 80 & 4 & 1.21 & 115.53 & 1.21 & 1.75 & 1.21 & 0.20 & -1.06 & 1.87\\
         &  &  & 5 & 1.22 & 113.89 & 1.22 & 1.73 & 1.22 & 0.20 & -1.06 & 1.84\\
        \cmidrule(l{3pt}r{3pt}){2-12}
         & 2 & 40 & 4 & 0.33 & 153.98 & 0.33 & 4.51 & 0.33 & 0.73 & -1.69 & 2.38\\
         &  &  & 5 & 0.36 & 152.15 & 0.36 & 4.47 & 0.36 & 0.73 & -1.69 & 2.32\\
        \cmidrule(l{3pt}r{3pt}){3-12}
         &  & 80 & 4 & 0.12 & 105.39 & 0.12 & 1.61 & 0.12 & 0.21 & -0.86 & 1.16\\
         &  &  & 5 & 0.14 & 103.39 & 0.14 & 1.59 & 0.14 & 0.21 & -0.86 & 1.12\\
        \midrule 
        3 & 1 & 40 & 4 & 1.05 & 195.16 & 1.05 & 6.65 & 1.05 & 0.75 & -1.98 & 3.71\\
         &  &  & 5 & 1.06 & 194.02 & 1.06 & 6.62 & 1.06 & 0.75 & -1.98 & 3.67\\
        \cmidrule(l{3pt}r{3pt}){3-12}
         &  & 80 & 4 & 2.22 & 118.78 & 2.22 & 2.21 & 2.22 & 0.24 & -1.06 & 1.87\\
         &  &  & 5 & 2.23 & 117.14 & 2.23 & 2.20 & 2.23 & 0.24 & -1.06 & 1.84\\
        \cmidrule(l{3pt}r{3pt}){2-12}
         & 2 & 40 & 4 & 1.26 & 158.73 & 1.26 & 5.52 & 1.26 & 0.76 & -1.69 & 2.38\\
         &  &  & 5 & 1.32 & 156.90 & 1.32 & 5.49 & 1.32 & 0.76 & -1.69 & 2.32\\
        \cmidrule(l{3pt}r{3pt}){3-12}
         &  & 80 & 4 & 0.27 & 106.91 & 0.27 & 2.08 & 0.27 & 0.21 & -0.86 & 1.16\\
         &  &  & 5 & 0.30 & 104.89 & 0.30 & 2.07 & 0.30 & 0.21 & -0.86 & 1.12\\
		\bottomrule
	\end{tabular}
	\end{footnotesize}
	\end{center}
    \begin{footnotesize}
        Note: All entries are multiplied by $100$.
        The same simulation setups are considered as in Table \ref{tab:MC1}, except that 4 and 5 in the $\mathcal{K}$ column indicate the standard triangular and Epanechnikov kernels, respectively, which vanish outside $[0, 1]$.
    \end{footnotesize}
\end{table}

\begin{table}[p]
	\caption{Monte Carlo simulation results: Bandwidth selection ($h = 0.5 \sqrt{ (\log n) / n }$)} \label{tab:MC4}
	\begin{center}
    \begin{footnotesize}
	\begin{tabular}[h]{rrrrrrrrrrrr}
		\toprule
		\multicolumn{2}{c}{DGP} & \multicolumn{2}{c}{ } & \multicolumn{2}{c}{DATE} & \multicolumn{2}{c}{IATE} & \multicolumn{2}{c}{GATE} & \multicolumn{2}{c}{PS} \\
		\cmidrule(l{3pt}r{3pt}){1-2} \cmidrule(l{3pt}r{3pt}){5-6} \cmidrule(l{3pt}r{3pt}){7-8} \cmidrule(l{3pt}r{3pt}){9-10} \cmidrule(l{3pt}r{3pt}){11-12}
		$Y$ & $P$ & $n$ & $\mathcal{K}$ & AvgBias & AvgMSE & AvgBias & AvgMSE & AvgBias & AvgMSE & AvgBias & AvgMSE \\
		\midrule
		1 & 1 & 40 & 1 & 0.25 & 60.00 & 0.25 & 1.52 & 0.25 & 0.41 & -1.69 & 2.25\\
         &  &  & 2 & 0.25 & 60.84 & 0.25 & 1.55 & 0.25 & 0.41 & -1.69 & 2.30\\
         &  &  & 3 & 0.25 & 61.28 & 0.25 & 1.56 & 0.25 & 0.41 & -1.69 & 2.32\\
        \cmidrule(l{3pt}r{3pt}){3-12}
         &  & 80 & 1 & 0.06 & 39.58 & 0.06 & 0.50 & 0.06 & 0.10 & -1.14 & 1.38\\
         &  &  & 2 & 0.05 & 40.16 & 0.05 & 0.51 & 0.05 & 0.10 & -1.13 & 1.40\\
         &  &  & 3 & 0.05 & 40.54 & 0.05 & 0.51 & 0.05 & 0.10 & -1.13 & 1.41\\
        \cmidrule(l{3pt}r{3pt}){2-12}
         & 2 & 40 & 1 & -0.33 & 71.72 & -0.33 & 1.88 & -0.33 & 0.41 & -1.69 & 1.86\\
         &  &  & 2 & -0.32 & 72.75 & -0.32 & 1.90 & -0.32 & 0.41 & -1.68 & 1.93\\
         &  &  & 3 & -0.32 & 73.25 & -0.32 & 1.92 & -0.32 & 0.41 & -1.68 & 1.95\\
        \cmidrule(l{3pt}r{3pt}){3-12}
         &  & 80 & 1 & 0.12 & 54.39 & 0.12 & 0.70 & 0.12 & 0.12 & -0.88 & 1.00\\
         &  &  & 2 & 0.11 & 55.12 & 0.11 & 0.70 & 0.11 & 0.12 & -0.88 & 1.03\\
         &  &  & 3 & 0.11 & 55.55 & 0.11 & 0.71 & 0.11 & 0.12 & -0.88 & 1.05\\
        \midrule
        2 & 1 & 40 & 1 & 2.75 & 109.94 & 2.75 & 3.33 & 2.75 & 0.81 & -1.69 & 2.25\\
         &  &  & 2 & 2.73 & 111.31 & 2.73 & 3.36 & 2.73 & 0.80 & -1.69 & 2.30\\
         &  &  & 3 & 2.73 & 112.05 & 2.73 & 3.37 & 2.73 & 0.80 & -1.69 & 2.32\\
        \cmidrule(l{3pt}r{3pt}){3-12}
         &  & 80 & 1 & 1.85 & 72.00 & 1.85 & 1.19 & 1.85 & 0.22 & -1.14 & 1.38\\
         &  &  & 2 & 1.82 & 72.97 & 1.82 & 1.20 & 1.82 & 0.22 & -1.13 & 1.40\\
         &  &  & 3 & 1.81 & 73.62 & 1.81 & 1.20 & 1.81 & 0.22 & -1.13 & 1.41\\
        \cmidrule(l{3pt}r{3pt}){2-12}
         & 2 & 40 & 1 & 1.23 & 129.08 & 1.23 & 3.92 & 1.23 & 0.75 & -1.69 & 1.86\\
         &  &  & 2 & 1.14 & 130.56 & 1.14 & 3.94 & 1.14 & 0.75 & -1.68 & 1.93\\
         &  &  & 3 & 1.12 & 131.33 & 1.12 & 3.96 & 1.12 & 0.75 & -1.68 & 1.95\\
        \cmidrule(l{3pt}r{3pt}){3-12}
         &  & 80 & 1 & 0.13 & 97.91 & 0.13 & 1.53 & 0.13 & 0.21 & -0.88 & 1.00\\
         &  &  & 2 & 0.10 & 99.08 & 0.10 & 1.53 & 0.10 & 0.21 & -0.88 & 1.03\\
         &  &  & 3 & 0.09 & 99.78 & 0.09 & 1.54 & 0.09 & 0.21 & -0.88 & 1.05\\
        \midrule 
        3 & 1 & 40 & 1 & 4.59 & 114.82 & 4.59 & 4.21 & 4.59 & 0.96 & -1.69 & 2.25\\
         &  &  & 2 & 4.55 & 116.14 & 4.55 & 4.22 & 4.55 & 0.95 & -1.69 & 2.30\\
         &  &  & 3 & 4.54 & 116.88 & 4.54 & 4.24 & 4.54 & 0.95 & -1.69 & 2.32\\
        \cmidrule(l{3pt}r{3pt}){3-12}
         &  & 80 & 1 & 3.16 & 74.63 & 3.16 & 1.66 & 3.16 & 0.29 & -1.14 & 1.38\\
         &  &  & 2 & 3.11 & 75.56 & 3.11 & 1.66 & 3.11 & 0.28 & -1.13 & 1.40\\
         &  &  & 3 & 3.09 & 76.19 & 3.09 & 1.66 & 3.09 & 0.28 & -1.13 & 1.41\\
        \cmidrule(l{3pt}r{3pt}){2-12}
         & 2 & 40 & 1 & 2.62 & 133.35 & 2.62 & 5.01 & 2.62 & 0.82 & -1.69 & 1.86\\
         &  &  & 2 & 2.47 & 134.72 & 2.47 & 5.00 & 2.47 & 0.82 & -1.68 & 1.93\\
         &  &  & 3 & 2.43 & 135.49 & 2.43 & 5.01 & 2.43 & 0.82 & -1.68 & 1.95\\
        \cmidrule(l{3pt}r{3pt}){3-12}
         &  & 80 & 1 & 0.30 & 99.36 & 0.30 & 2.02 & 0.30 & 0.21 & -0.88 & 1.00\\
         &  &  & 2 & 0.24 & 100.50 & 0.24 & 2.01 & 0.24 & 0.21 & -0.88 & 1.03\\
         &  &  & 3 & 0.23 & 101.20 & 0.23 & 2.01 & 0.23 & 0.21 & -0.88 & 1.05\\
		\bottomrule
	\end{tabular}
    \end{footnotesize}
	\end{center}
    \begin{footnotesize}
        Note: All entries are multiplied by $100$.
        The same simulation setups are considered as in Table \ref{tab:MC1}, except that the bandwidth is set to the $h = 0.5 \sqrt{ (\log n) / n }$-th quantile of $\{ \tilde d(i, j) \}_{j: j \neq i}$.
    \end{footnotesize}
\end{table}

\begin{table}[p]
	\caption{Monte Carlo simulation results: Bandwidth selection ($h = 2 \sqrt{ (\log n) / n }$)} \label{tab:MC5}
	\begin{center}
	\begin{footnotesize}
	\begin{tabular}[h]{rrrrrrrrrrrr}
		\toprule
		\multicolumn{2}{c}{DGP} & \multicolumn{2}{c}{ } & \multicolumn{2}{c}{DATE} & \multicolumn{2}{c}{IATE} & \multicolumn{2}{c}{GATE} & \multicolumn{2}{c}{PS} \\
		\cmidrule(l{3pt}r{3pt}){1-2} \cmidrule(l{3pt}r{3pt}){5-6} \cmidrule(l{3pt}r{3pt}){7-8} \cmidrule(l{3pt}r{3pt}){9-10} \cmidrule(l{3pt}r{3pt}){11-12}
		$Y$ & $P$ & $n$ & $\mathcal{K}$ & AvgBias & AvgMSE & AvgBias & AvgMSE & AvgBias & AvgMSE & AvgBias & AvgMSE \\
		\midrule
		1 & 1 & 40 & 1 & 0.23 & 19.30 & 0.23 & 0.51 & 0.23 & 0.37 & -1.55 & 1.54\\
         &  &  & 2 & 0.23 & 20.12 & 0.23 & 0.53 & 0.23 & 0.37 & -1.62 & 1.53\\
         &  &  & 3 & 0.23 & 20.38 & 0.23 & 0.54 & 0.23 & 0.38 & -1.62 & 1.53\\
        \cmidrule(l{3pt}r{3pt}){3-12}
         &  & 80 & 1 & 0.05 & 12.29 & 0.05 & 0.16 & 0.05 & 0.10 & -1.24 & 1.14\\
         &  &  & 2 & 0.05 & 12.69 & 0.05 & 0.17 & 0.05 & 0.10 & -1.24 & 1.10\\
         &  &  & 3 & 0.05 & 12.84 & 0.05 & 0.17 & 0.05 & 0.10 & -1.24 & 1.09\\
        \cmidrule(l{3pt}r{3pt}){2-12}
         & 2 & 40 & 1 & -0.39 & 19.19 & -0.39 & 0.52 & -0.39 & 0.35 & -1.47 & 2.27\\
         &  &  & 2 & -0.38 & 20.10 & -0.38 & 0.54 & -0.38 & 0.35 & -1.57 & 1.83\\
         &  &  & 3 & -0.38 & 20.27 & -0.38 & 0.55 & -0.38 & 0.35 & -1.58 & 1.80\\
        \cmidrule(l{3pt}r{3pt}){3-12}
         &  & 80 & 1 & 0.10 & 15.32 & 0.10 & 0.19 & 0.10 & 0.09 & -1.00 & 0.73\\
         &  &  & 2 & 0.10 & 15.98 & 0.10 & 0.20 & 0.10 & 0.09 & -1.00 & 0.65\\
         &  &  & 3 & 0.10 & 16.09 & 0.10 & 0.20 & 0.10 & 0.09 & -1.01 & 0.64\\
        \midrule
        2 & 1 & 40 & 1 & 4.46 & 36.44 & 4.46 & 1.82 & 4.46 & 0.85 & -1.55 & 1.54\\
         &  &  & 2 & 4.29 & 37.70 & 4.29 & 1.78 & 4.29 & 0.85 & -1.62 & 1.53\\
         &  &  & 3 & 4.26 & 38.14 & 4.26 & 1.79 & 4.26 & 0.85 & -1.62 & 1.53\\
        \cmidrule(l{3pt}r{3pt}){3-12}
         &  & 80 & 1 & 3.08 & 23.23 & 3.08 & 0.76 & 3.08 & 0.26 & -1.24 & 1.14\\
         &  &  & 2 & 2.95 & 23.80 & 2.95 & 0.73 & 2.95 & 0.26 & -1.24 & 1.10\\
         &  &  & 3 & 2.93 & 24.06 & 2.93 & 0.73 & 2.93 & 0.26 & -1.24 & 1.09\\
        \cmidrule(l{3pt}r{3pt}){2-12}
         & 2 & 40 & 1 & 10.52 & 38.33 & 10.52 & 3.02 & 10.52 & 1.70 & -1.47 & 2.27\\
         &  &  & 2 & 9.74 & 39.14 & 9.74 & 2.82 & 9.74 & 1.55 & -1.57 & 1.83\\
         &  &  & 3 & 9.65 & 39.38 & 9.65 & 2.81 & 9.65 & 1.53 & -1.58 & 1.80\\
        \cmidrule(l{3pt}r{3pt}){3-12}
         &  & 80 & 1 & 5.76 & 29.28 & 5.76 & 1.53 & 5.76 & 0.50 & -1.00 & 0.73\\
         &  &  & 2 & 4.84 & 29.92 & 4.84 & 1.32 & 4.84 & 0.41 & -1.00 & 0.65\\
         &  &  & 3 & 4.72 & 30.08 & 4.72 & 1.30 & 4.72 & 0.40 & -1.01 & 0.64\\
        \midrule
        3 & 1 & 40 & 1 & 7.32 & 40.65 & 7.32 & 3.21 & 7.32 & 1.21 & -1.55 & 1.54\\
         &  &  & 2 & 7.02 & 41.66 & 7.02 & 3.05 & 7.02 & 1.18 & -1.62 & 1.53\\
         &  &  & 3 & 6.98 & 42.08 & 6.98 & 3.04 & 6.98 & 1.17 & -1.62 & 1.53\\
        \cmidrule(l{3pt}r{3pt}){3-12}
         &  & 80 & 1 & 5.22 & 26.09 & 5.22 & 1.51 & 5.22 & 0.45 & -1.24 & 1.14\\
         &  &  & 2 & 5.00 & 26.44 & 5.00 & 1.41 & 5.00 & 0.42 & -1.24 & 1.10\\
         &  &  & 3 & 4.96 & 26.67 & 4.96 & 1.40 & 4.96 & 0.42 & -1.24 & 1.09\\
        \cmidrule(l{3pt}r{3pt}){2-12}
         & 2 & 40 & 1 & 17.55 & 47.28 & 17.55 & 6.69 & 17.55 & 3.69 & -1.47 & 2.27\\
         &  &  & 2 & 16.24 & 46.74 & 16.24 & 6.04 & 16.24 & 3.25 & -1.57 & 1.83\\
         &  &  & 3 & 16.09 & 46.88 & 16.09 & 6.01 & 16.09 & 3.20 & -1.58 & 1.80\\
        \cmidrule(l{3pt}r{3pt}){3-12}
         &  & 80 & 1 & 9.33 & 32.96 & 9.33 & 3.37 & 9.33 & 1.05 & -1.00 & 0.73\\
         &  &  & 2 & 7.80 & 32.77 & 7.80 & 2.81 & 7.80 & 0.79 & -1.00 & 0.65\\
         &  &  & 3 & 7.60 & 32.85 & 7.60 & 2.77 & 7.60 & 0.76 & -1.01 & 0.64\\
		\bottomrule
	\end{tabular}
    \end{footnotesize}
	\end{center}
    \begin{footnotesize}
        Note: All entries are multiplied by $100$.
        The same simulation setups are considered as in Table \ref{tab:MC1}, except that the bandwidth is set to the $h = 2 \sqrt{ (\log n) / n }$-th quantile of $\{ \tilde d(i, j) \}_{j: j \neq i}$.
    \end{footnotesize}
\end{table}

\begin{table}

\caption{Monte Carlo simulation results: Permutation inference}\label{tab:permutation}
    \begin{center}
    \begin{footnotesize}
    \begin{tabular}[h]{rrrrrrr}
    \toprule
    \multicolumn{2}{c}{DGP} & \multicolumn{2}{c}{ } & \multicolumn{3}{c}{Rejection frequency (\%)} \\
    \cmidrule(l{3pt}r{3pt}){1-2} \cmidrule(l{3pt}r{3pt}){5-7}
    $Y$ & $P$ & $n$ & $\mathcal{K}$ & $\alpha = 0.1$ & $\alpha = 0.05$ & $\alpha = 0.01$\\
    \midrule
    1 & 1 & 40 & 1 & 11.3 & 6.0 & 1.8\\
     &  &  & 2 & 10.7 & 6.2 & 1.8\\
     &  &  & 3 & 10.6 & 6.0 & 1.9\\
    \cmidrule(l{3pt}r{3pt}){3-7}
     &  & 80 & 1 & 8.8 & 3.7 & 0.4\\
     &  &  & 2 & 9.0 & 4.1 & 0.4\\
     &  &  & 3 & 9.1 & 4.0 & 0.4\\
    \cmidrule(l{3pt}r{3pt}){2-7}
     & 2 & 40 & 1 & 10.8 & 4.5 & 1.0\\
     &  &  & 2 & 10.7 & 4.7 & 0.9\\
     &  &  & 3 & 10.6 & 4.7 & 1.0\\
    \cmidrule(l{3pt}r{3pt}){3-7}
     &  & 80 & 1 & 9.4 & 3.6 & 0.7\\
     &  &  & 2 & 9.8 & 3.6 & 0.7\\
     &  &  & 3 & 9.8 & 3.6 & 0.7\\
    \midrule
    2 & 1 & 40 & 1 & 99.6 & 98.8 & 93.8\\
     &  &  & 2 & 99.5 & 98.9 & 94.0\\
     &  &  & 3 & 99.5 & 98.8 & 93.9\\
    \cmidrule(l{3pt}r{3pt}){3-7}
     &  & 80 & 1 & 100.0 & 100.0 & 100.0\\
     &  &  & 2 & 100.0 & 100.0 & 100.0\\
     &  &  & 3 & 100.0 & 100.0 & 100.0\\
    \cmidrule(l{3pt}r{3pt}){2-7}
     & 2 & 40 & 1 & 99.1 & 97.0 & 87.1\\
     &  &  & 2 & 98.7 & 96.2 & 86.3\\
     &  &  & 3 & 98.7 & 96.2 & 85.9\\
    \cmidrule(l{3pt}r{3pt}){3-7}
     &  & 80 & 1 & 100.0 & 100.0 & 100.0\\
     &  &  & 2 & 100.0 & 100.0 & 100.0\\
     &  &  & 3 & 100.0 & 100.0 & 100.0\\
    \midrule 
    3 & 1 & 40 & 1 & 100.0 & 99.8 & 99.1\\
     &  &  & 2 & 100.0 & 99.8 & 99.1\\
     &  &  & 3 & 100.0 & 99.8 & 99.2\\
    \cmidrule(l{3pt}r{3pt}){3-7}
     &  & 80 & 1 & 100.0 & 100.0 & 100.0\\
     &  &  & 2 & 100.0 & 100.0 & 100.0\\
     &  &  & 3 & 100.0 & 100.0 & 100.0\\
    \cmidrule(l{3pt}r{3pt}){2-7}
     & 2 & 40 & 1 & 100.0 & 99.8 & 98.4\\
     &  &  & 2 & 100.0 & 99.7 & 97.7\\
     &  &  & 3 & 100.0 & 99.7 & 97.5\\
    \cmidrule(l{3pt}r{3pt}){3-7}
     &  & 80 & 1 & 100.0 & 100.0 & 100.0\\
     &  &  & 2 & 100.0 & 100.0 & 100.0\\
     &  &  & 3 & 100.0 & 100.0 & 100.0\\
    \bottomrule
    \end{tabular}
    \end{footnotesize}
    \end{center}
    \begin{footnotesize}
        Note: The same simulation setups are considered as in Table \ref{tab:MC1}.
    \end{footnotesize}
\end{table}

\clearpage

\section{Treatment Spillover}\label{app:relax_sutva}

This appendix briefly discusses a possible extension of our framework that relaxes the no-interference assumption.

Suppose that, for each unit $i$, the primary partner $s(i)$ is known in advance.
For example, in the context of international trade, $s(i)$ could be the largest trading partner.
We focus on the case where interference for unit $i$ is driven exclusively by this partner $s(i)$, and other units do not cause spillover effects for unit $i$.
Specifically, we assume that the potential outcome given $A_{ij} = a$ and $A_{i s(i)} = b$ is well-defined for any possible $(a, b) \in \{ 0, 1 \}^2$ and dyad $(i, j)$, which we denote by $Y_{ij}(a, b)$.
If $s(i) = j$, it must hold that $a = b$, and the potential outcome here reduces to $Y_{ij}(a)$ as in the main text.
Then, the observed outcome for dyad $(i, j)$ is given by $Y_{ij} = Y_{ij}( A_{ij}, A_{i s(i)} )$.
We assume that the treatment exhibits the same graphon representation as in Assumption \ref{as:treatmentmat} in the main text:
\begin{align}
	A_{ij} &= \bm{1} \left\{ f(U_i, U_j) \ge U_{ij} \right\}, \\
	A_{i s(i)} &= \bm{1}\left\{ f(U_i, U_{s(i)}) \ge U_{i s(i)} \right\}.
\end{align}
The corresponding propensity scores are given by $P_{ij} = f(U_i, U_j)$ and $P_{i s(i)} = f(U_i, U_{s(i)})$, respectively.

The following proposition extends Proposition \ref{prop:DATE} to this setting.

\begin{proposition} \label{prop:spillover}
	Suppose that Assumptions \ref{as:treatmentmat} and \ref{as:overlap} hold.
	If $Y_{ij}(a, b)$ is conditionally independent of $(A_{ij}, A_{i s(i)})$ given $(U_i, U_j, U_{s(i)})$ for all dyads $(i, j)$ and all possible $(a, b) \in \{ 0, 1 \}^2$, then
	\begin{align}\label{eq:ident_sutva}
		\mathbb{E} [ Y_{ij}(a,b) \mid P_{ij}, P_{i s(i)}]
		= \mathbb{E} [ Y_{ij} \mid A_{ij} = a,\ A_{i s(i)} = b, P_{ij}, P_{i s(i)} ].
	\end{align}
\end{proposition}

\begin{proof}
	For notational simplicity, we write $s = s(i)$.
	By Assumption \ref{as:treatmentmat}, it is easy to see that
	\begin{align}
		\text{pr} ( A_{ij} = a, A_{is} = b \mid P_{ij}, P_{is} )
		& = \text{pr} ( A_{ij} = a, A_{is} = b \mid U_i, U_j, U_s ) \\
		& = P_{ij}^{a} ( 1 - P_{ij} )^{1-a} P_{is}^{b} ( 1 - P_{is} )^{1-b}.
	\end{align}
	By the law of iterated expectations and the conditional independence assumption, we obtain
	\begin{align}
		& \text{pr}\left(A_{ij}=a, A_{is}=b \mid Y_{ij}(a,b), P_{ij},P_{is}\right)  \\
		& \quad = \mathbb{E}\left[ \text{pr}\left(A_{ij}=a, A_{is}=b \mid Y_{ij}(a,b), U_i, U_j, U_s \right)\mid Y_{ij}(a,b), P_{ij}, P_{is} \right]  \\
		& \quad = \mathbb{E}\left[ \text{pr}\left(A_{ij}=a, A_{is}=b \mid U_i, U_j, U_s \right) \mid Y_{ij}(a,b), P_{ij}, P_{is} \right] \\
		& \quad = P_{ij}^{a} ( 1 - P_{ij} )^{1-a} P_{is}^{b} ( 1 - P_{is} )^{1-b},
	\end{align}
	implying that $Y_{ij}(a,b)$ is conditionally independent of $( A_{ij}, A_{is} )$ given $(P_{ij}, P_{is})$.
	Thus,
	\begin{align}
		\mathbb{E}\left[Y_{ij} \mid A_{ij}=a,\ A_{is}=b,\ P_{ij},P_{is}\right]
		& = \mathbb{E}\left[Y_{ij}(a,b)\mid A_{ij}=a,\ A_{is}=b,\ P_{ij},P_{is}\right] \\
		& = \mathbb{E}\left[Y_{ij}(a,b)\mid P_{ij},P_{is}\right].
	\end{align}
\end{proof}

Since the propensity scores $P_{ij}$ and $P_{i s(i)}$ can be estimated by the neighbourhood kernel smoothing method developed in Section \ref{sec:method}, Proposition \ref{prop:spillover} suggests that the conditional potential outcome mean $\mathbb{E} [ Y_{ij}(a,b) \mid P_{ij}, P_{i s(i)}]$ can also be estimated using an analogous approach.
Then, it would be possible to estimate the dyadic average direct effect and the dyadic average spillover effect:
\begin{align}
    \tau_{ij}^{\mathrm{direct}}(b)
    & \equiv
    \mathbb{E}\left[Y_{ij}(1,b)-Y_{ij}(0,b)\mid P_{ij},P_{is}\right], \\
    \tau_{ij}^{\mathrm{spillover}}(a)
    & \equiv
    \mathbb{E}\left[Y_{ij}(a,1)-Y_{ij}(a,0)\mid P_{ij},P_{is}\right].
\end{align}
Nevertheless, this extension substantially complicates the formal asymptotic theory and the choice of appropriate bandwidths.
Consequently, we leave a detailed investigation of this setting for future research.

\end{document}